\documentclass[11pt,a4paper]{article}
\pdfoutput=1 
\usepackage{fullpage}

\usepackage{algorithm}
\usepackage{algpseudocode}
\usepackage{amsmath}
\usepackage{amsfonts}
\usepackage{amssymb}
\usepackage{amsthm}
\allowdisplaybreaks
\usepackage{mathtools}
\usepackage{thmtools} 

\usepackage[svgnames]{xcolor}
\usepackage{hyperref}
\hypersetup{colorlinks={true},urlcolor={blue},linkcolor={DarkBlue},citecolor=[named]{DarkGreen}}
\usepackage[authoryear,square]{natbib}
\usepackage{enumitem}
\usepackage[capitalise,nameinlink,noabbrev]{cleveref}

\usepackage[T1]{fontenc}
\usepackage[tt=false]{libertine}
\usepackage[libertine]{newtxmath}

\usepackage{doi}
\usepackage{bm}
\usepackage{bbm}
\usepackage{xspace}

\usepackage{booktabs} 
\usepackage{authblk} 
\usepackage{graphicx} 

\usepackage{tcolorbox}

\newtheorem{theorem}{Theorem}[section]
\newtheorem{corollary}[theorem]{Corollary}

\newtheorem{inftheorem}{Informal Theorem}

\theoremstyle{definition}
\newtheorem*{comment*}{Comment}

\newcommand{\bv}{\mathbf{v}}

\newcommand{\bX}{\mathbf{X}}

\newcommand{\calA}{\mathcal{A}}
\newcommand{\calP}{\mathcal{P}}

\usepackage{xcolor}

\usepackage{xspace}
\newcommand{\RR}{\textnormal{\textsc{RoundRobin}}\xspace}
\newcommand{\RRLA}{\textnormal{\textsc{RoundRobin-up-to-the-LastAgent}}\xspace}
\newcommand{\sRRLA}{\textnormal{RRLA}\xspace}
\newcommand{\PRR}{\textnormal{\textsc{Partition-and-RoundRobin}}\xspace}
\newcommand{\sPRR}{\textnormal{PRR}\xspace}
\newcommand{\MFRR}{\textnormal{\textsc{Match\textnormal{\&}Freeze-and-RoundRobin}}\xspace}
\newcommand{\sMFRR}{\textnormal{MFRR}\xspace}

\author[1]{Aris Filos-Ratsikas}
\author[1]{Georgios Kalantzis}
\author[2]{Alexandros A. Voudouris}

\affil[1]{University of Edinburgh, UK}
\affil[2]{University of Essex, UK}

\date{}

\title{\bf Approximate-EFX Allocations with Ordinal and Limited Cardinal Information}

\begin{document}

\maketitle

\begin{abstract}
   We study a discrete fair division problem where $n$ agents have additive valuation functions over a set of $m$ goods. We focus on the well-known $\alpha$-EFX fairness criterion, according to which the envy of an agent for another agent is bounded multiplicatively by $\alpha$, after the removal of \emph{any} good from the envied agent's bundle. The vast majority of the literature has studied $\alpha$-EFX allocations under the assumption that full knowledge of the valuation functions of the agents is available. Motivated by the established literature on the \emph{distortion in social choice}, we instead consider $\alpha$-EFX algorithms that operate under \emph{limited information} on these functions. In particular, we assume that the algorithm has access to the ordinal preference rankings, and is allowed to make a small number of queries to obtain further access to the underlying values of the agents for the goods. We show (near-optimal) tradeoffs between the values of $\alpha$ and the number of queries required to achieve those, with a particular focus on constant EFX approximations. We also consider two interesting special cases, namely instances with a constant number of agents, or with two possible values, and provide improved positive results.    
\end{abstract}

\section{Introduction}

The area of (computational) fair division is concerned with the \emph{fair} allocation of scarce {\em resources} to {\em agents} that have heterogeneous preferences over these resources. Rooted in classic works in economics and mathematics dating back to the late 1940s (e.g., \citep{steinhaus1949division}), and with a strong presence currently in most of the major venues in computer science, the associated literature has studied a plethora of different fair division settings, and has introduced several meaningful notions of fairness. One of the most prominent settings is that of \emph{discrete fair division} (where the resources are indivisible and must irrevocably be allocated to different agents) with the associated notions of fairness being, among others, \emph{relaxations of envy-freeness}. 

Envy-freeness is a quite strong criterion stipulating that the resources (or, {\em goods}) must be allocated in such a way so that no agent would prefer to swap her allocated bundle of goods with that of another agent. While this notion has been quite central in the fair division of divisible goods, very simple examples show that in discrete fair division, envy-free allocations fail to exist. This led researchers in the area, e.g., see \citep{budish2011combinatorial,lipton2004approximately,caragiannis2019unreasonable} to define several relaxations of envy-freeness, with the most popular ones being \emph{envy-freeness up to one good (EF1)} and \emph{envy-freeness up to any good (EFX)}. As the names suggest, these notions stipulate that, if an agent $i$ envies another agent $j$, then this envy of $i$ can be eliminated by the removal of a single good from $j$'s bundle. For EF1, that good can be the favorite one of agent $i$, whereas, for EFX, it has to be her {\em least} favorite.
By definition, EFX is clearly a much fairer, and thus more desirable notion. That being said, while EF1 allocations always exist and can be computed by simple polynomial-time algorithms, the status of EFX allocations is much more enigmatic. Despite intense efforts from an interdisciplinary community, we are still unaware whether EFX allocations exist in general, and, if they do, whether they can be computed in polynomial time; answering this question is arguably the most important open problem in the area of computational fair division. 

While obtaining an answer to this question has turned out to be very challenging, the literature has been rather successful in obtaining positive results for special cases. For example, in a breakthrough work, \citet{chaudhury2024efx} showed that EFX allocations exist for up to three agents, \citet{amanatidis2021maximum} showed that EFX is achievable in polynomial time for bivalued instances in which the agents have up to two different values for the goods, and \citet{christodoulou2023fair} showed similar results for instances that can be encoded by graphs. A significant line of work has considered \emph{$\alpha$-EFX approximations}, where the envy that remains after the removal of any good from the envied agent's bundle is bounded by a multiplicative factor $\alpha \in [0,1]$; e.g., see \citep{amanatidis2024pushing,ANM2019,farhadi2021almost,markakis2023improv,kaviani2025improved,hv2025almost}. Presently, the best known EFX approximation guarantee without any restrictions on the number of agents or their values is $\alpha = \phi-1 = 0.618$ due to \citet{ANM2019}. 

Crucially, all of the aforementioned results are contingent on a rather crucial assumption, namely that the algorithms have \emph{full access to the actual values of the agents for the goods}. More precisely, it is assumed that every agent $i$ has a (\emph{cardinal}) value $v_i(g)$ for each good, and these values are fully and perfectly elicited by the algorithms in their inputs. Although this assumption enables the design of algorithms with good EFX approximations, it neglects the cognitive burden that this elicitation process imposes on the agents, who are asked to express their values for every good on a cardinal scale. It is arguably much more conceivable for agents to provide {\em ordinal} information by ranking the different goods, as well as a limited amount of cardinal information about their actual values that will guide the algorithm's decisions towards the fairest outcomes possible. The premise of this idea is the backbone of the well-established literature of the \emph{distortion} in social choice theory \citep{procaccia2006distortion,distortion-survey}, which measures the deterioration of some desirable social objective due to limited access to the agents' cardinal values. While the objectives typically studied in the context of distortion are aggregate (e.g., the social welfare or the egalitarian welfare), the underlying principle can be applied verbatim to individual objectives such as the fairness objectives discussed above. Motivated by this, the main question that we aim to answer is the following:

\begin{quote}
\emph{What is the best possible EFX approximation achievable by algorithms that only use ordinal information, and possibly also a small additional amount of cardinal information?}
\end{quote}

\noindent Below, we first present the model that we will study in our work, and then state our main results related to the question above. 

\subsection{The Model}
We consider a discrete fair division setting with a set $N=\{1,\ldots,n\}$ of $n \geq 2$ {\em agents} and a set $G$ of $m \geq 2$ {\em goods}. 
Each agent $i \in N$ has an additive {\em valuation function} $v_i:G \rightarrow \mathbb{R}_{\geq 0}$ which assigns a non-negative value $v_i(g)$ to each good $g \in G$. 
Let $\bv = (v_i)_{i \in N}$ be the {\em valuation profile} consisting of the valuation functions of all agents.  
The value of agent $i$ for a {\em bundle} $X \subseteq G$ of goods is $v_i(X) = \sum_{g \in X} v_i(g)$. A {\em partial allocation} $\bX = (X_i)_{i \in N}$ is a partition of a subset of the goods into $n$ bundles such that $X_i \cap X_j = \varnothing$. 
A complete allocation, or simply, {\em allocation}, is a partial allocation such that $\bigcup_{i \in N} X_i = G$. 

We are interested in computing allocations that are (approximately) fair according to (relaxations of) {\em envy-freeness}. In particular, an allocation $\bX$ is {\em envy-free} (EF) if, for any two agents $i$ and $j$, $v_i(X_i) \geq v_i(X_j)$. As already mentioned, this fairness criterion cannot be satisfied even for simple instances, and we thus focus on two of its most-studied relaxations.
Let $\alpha$ be any real number in $[0,1]$, and let $\bX$ be an allocation of the goods to the agents. Then:
\begin{itemize}
    \item $\bX$ is $\alpha$-{\em envy-free up to one good} ($\alpha$-EF1), if for any agent $i$ and agent $j\neq i$ such that $X_j \neq \varnothing$, it holds that $v_i(X_i) \geq \alpha \cdot v_i(X_j\setminus \{g\})$ for \emph{some} $g \in X_j$. If $\alpha=1$, then the allocation is EF1.

    \item $\bX$ is $\alpha$-{\em envy-free up to any good} ($\alpha$-EFX), if for any agent $i$ and agent $j\neq i$ such that $X_j \neq \varnothing$, it holds that $v_i(X_i) \geq \alpha \cdot v_i(X_j\setminus \{g\})$ for \emph{any} $g \in X_j$. If $\alpha=1$, then the allocation is EFX.
\end{itemize}

An algorithm $\calA$ takes as input some information $I(\bv)$ about the valuation profile $\bv$ and outputs an allocation $\bX=\calA(I(\bv))$. 
In contrast to previous work on this area, where complete access to the valuation functions is assumed, i.e., $I(\bv)=\bv$, here we focus on algorithms that are given some \emph{limited} information about the valuations. In particular, we assume that $\mathcal{A}$ has access to the (ordinal) preference rankings $(\succ_{i})_{i \in N}$ of the agents that are \emph{consistent} with $\bv$, i.e., $v_i(g) \geq v_i(g') \Rightarrow g \succ_i g'$ (with ties being broken arbitrarily according to some fixed tie-breaking rule). Additionally, $\mathcal{A}$ can obtain some further restricted access to $\bv$ via a (limited) number of queries. Specifically, we adopt the query model of \citet{amanatidis2021peeking}, which was originally used in the context of distortion in utilitarian voting,  where the algorithm can perform a number of \emph{value queries} to the agents; such a query inputs an agent $i$ and a good $g$, and outputs the value $v_i(g)$ of the agent for that good. The objective is to achieve the best possible tradeoff between $\alpha$ (the EF1/EFX approximation factor) and the number of value queries per agent. We will refer to algorithms that only use the preference rankings of the agents as \emph{ordinal algorithms}, and to algorithms that are allowed to make additional queries as \emph{query-enhanced algorithms}. 

\subsection{Our Contribution}
As a warm-up, in \cref{sec:ordinal}, we consider the first part of our main question regarding purely ordinal algorithms. Unsurprisingly, it turns out that such algorithms are bound to perform poorly in terms of EFX approximations. We prove a tight bound, captured by the following informal theorem.\footnote{All of our results assume that $m \geq n$; this is without loss of generality, as otherwise a trivial (ordinal) algorithm that assigns at most one arbitrary good to each agent achieves (exact) EFX.}

\begin{inftheorem}\label{infthm:ordinal-algorithms}
The best possible EFX approximation achievable by any ordinal algorithm is $1/(m-n)$.  
\end{inftheorem}

\noindent Driven by this (expected) poor performance of ordinal algorithms, we then turn our attention to the second part of our main question and consider \emph{query-enhanced algorithms} in \cref{sec:binary-search}. For these, more powerful algorithms, we focus on quantifying the number of queries required to achieve constant-EFX approximations but do not aim to fully optimize these constants (even though, we do provide specific bounds in some cases). In other words, our goal is to answer the following question. 


\begin{quote}
\emph{
How many queries per agent are sufficient and necessary to achieve $\alpha$-EFX for constant $\alpha$?}
\end{quote}
For unrestricted instances with additive valuations, we manage to provide a clean qualitative answer to the question above, captured by the following informal theorem.

\begin{inftheorem}\label{infthm:polylogarithmic}
A number of queries per agent that is polylogarithmic in the number of goods is both necessary and sufficient to achieve $\alpha$-EFX for constant $\alpha$.
\end{inftheorem}

Unfolding the details of \cref{infthm:polylogarithmic}, our asymptotic upper bound on the number of queries is $O(n+\log^2 m)$ and is achieved by a novel algorithm which we present in \cref{sec:binary-search}; see \cref{cor:general-positive-constant}. Our algorithm works by first learning the values of each agent for their top $n-1$ goods via queries. Subsequently, using an appropriately defined set of threshold parameters, it performs $O(\log{m})$ binary searches (each requiring $O(\log{m})$ queries) to partition the goods into buckets of different value, and thus create a \emph{virtual valuation function} for each agent, which serves as a proxy for her real, unknown valuation function. Finally, the algorithm calls any (fully-informed) constant-EFX algorithm as a black-box, with the virtual valuation functions as input, and outputs the returned allocation. To obtain the desired constant EFX bound, we essentially bound the loss in the EFX approximation due to using the virtual valuation functions. The lower bound on the number of queries of \cref{infthm:polylogarithmic} (\cref{cor:general-negative-contant}) actually establishes that, for any constant $\alpha$, strictly more than (asymptotically) $\frac{\log m}{\log\log m}$ queries (i.e., $o\left(\frac{\log{m}}{\log\log{m}}\right)$ queries) are required to achieve $\alpha$-EFX. Observe that, for any $\varepsilon >0$, this number is asymptotically larger than $(\log{m})^{1-\varepsilon}$, and hence, when $n = O(\log^2 m)$, the bound is tight up to an almost logarithmic factor. We remark that both the upper and the lower bounds of \cref{infthm:polylogarithmic} are in fact corollaries of more general, parameterized bounds that we prove in \cref{sec:binary-search}; see \cref{thm:virtual}. \medskip

\noindent We also obtain improved $\alpha$-EFX approximations for two special cases, namely \emph{instances with a constant number of agents $n$} and \emph{bivalued instances}.

\medskip
\noindent 
{\bf Improved bounds for constant number of agents.} Instances with constant $n$ are very natural in practice, and, in fact, the state-of-the-art results in terms of exact or approximate-EFX allocations are established for such cases \citep{chaudhury2024efx,amanatidis2024pushing}. 
For constant $n$, our general parameterized bounds above do not preclude the existence of algorithms that might be able to compute  $\Omega\left(\sqrt{k}\cdot m^{-1/(2k-1)}\right)$-EFX allocations using $k$ queries per agent, whereas the algorithm we design in \cref{sec:binary-search} only achieves $\Omega\left(m^{-\log m / (k + \log m)}\right)$-EFX with $k$ queries. In \cref{sec:constant-n}, we propose a different algorithm, which achieves a tighter bound of $\Omega\left(\frac{1}{\sqrt{k}}\cdot m^{-1/(2k-1)}\right)$-EFX (\cref{cor:constant-n-general-k}). When both $k$ and $n$ are constants, this bound is tight (\cref{cor:constant-n-constant-k}). For general (non-constant) $k$, the EFX approximation guarantee of this algorithm approaches the best possible upper bound established by our impossibility results. In particular, with $O(\log m)$ queries per agent, it achieves near-constant EFX approximation, in particular, $\Omega(1/\sqrt{\log{m}})$; see \cref{cor:constant-n-k=logm}. 

\medskip
\noindent 
{\bf Improved bounds for bivalued instances.} 
Bivalued instances capture well-structured settings in which all agents have two (possibly different) values for the goods. Such instances were first studied in the fair division literature and for the EFX notion in particular by \citet{amanatidis2021maximum}. Since then, the study of bivalued instances and their variants has gained significant attention, leading to a plethora of works; e.g., see \citep{jin2025pareto,byrka2025probing,lin2025approximately,garg2023computing,babaioff2021fair}. This particular structure of the unknown valuation functions allows us to show a crisper bound of $O(\log{n})$ on the number of queries that suffice to achieve a constant EFX approximation (\cref{thm:bivalued-logn}). This is achieved by an algorithm which carefully combines algorithms that have been proposed in previous work, in particular, \RR and \textsc{Match}\&\textsc{Freeze}~\citep{amanatidis2021maximum,jin2025pareto}. Observe that this bound relies only on $n$, which is much smaller than $m$ in most applications.

\subsection{Related Work and Discussion}
The study of EFX allocations was initiated by \citet{caragiannis2019unreasonable}, as a fairer alternative to EF1 allocations, which were explicitly defined by \citet{budish2011combinatorial} but implicitly also studied before by \citet{lipton2004approximately}; see they survey of \citet{amanatidis2023fair} for an overview. As mentioned earlier, $\alpha$-EFX allocations have been in the forefront of research in computational fair division in recent years, with the best possible approximation currently known being $0.618$ due to \citet{ANM2019}, see also \citep{farhadi2021almost}. Improved approximations for important special cases have also been considered, such as a small number of (types of) agents \citep{amanatidis2024pushing,hv2025almost,chaudhury2024efx,akrami2025efx,hv2025efx}, instances on (multi)graphs \citep{amanatidis2024pushing,christodoulou2023fair,zeng2025structure,kaviani2025improved,sgouritsa2025existence}, and instances with a small number of possible values \citep{amanatidis2021maximum,amanatidis2024pushing,jin2025pareto,byrka2025probing,garg2023computing}. In the latter category, of particular importance to our work are the results on bivalued instances, for which (exact) EFX allocations are achievable. This was established first by \citet{amanatidis2021maximum} when all agents have the same two values, and was recently extended to the case of personalized bivalued instances independently by \citet{jin2025pareto} and \citet{byrka2025probing}. In addition, \citet{garg2023computing} showed that, for such instances, EFX and Pareto optimality can be achieved simultaneously. 

The model of query-enhanced ordinal algorithms that we consider was introduced by \citet{amanatidis2021peeking} for the study of the {\em distortion} in social choice theory; the distortion typically quantifies the deterioration of an aggregate objective, such as the social welfare, due to limited information about the preferences of the agents. Since its introduction, this model has been adopted by a series of works, see \citep{amanatidis2022matching,amanatidis2024dice,caragiannis2024beyond,ebadian2025bit}; that said, to the best of our knowledge, value query-enhanced ordinal algorithms had not been studied in the context of fair division prior to our work. It is worth to note that a small number of fairly recent works consider discrete fair division settings under query access to the agents' values, but, crucially, the models and investigations in those papers are markedly different from ours. Conceptually closest to ours is the work of \citet{oh2021fairly}, who considered fair division algorithms that access the preferences of the agents via value queries, but, contrary to our setting, in their investigation the algorithms are not aware even of the ordinal rankings of the agents. As a result, they are mainly concerned with EF1 allocations, and consider EFX allocations only for two agents. Naturally, since their algorithms are more restricted, their bounds are much weaker than ours. Quite recently, \citet{bu2024logarithmic} studied a setting where the algorithms make queries to compare the values of the agents for different bundles, and show $\Theta(\log m)$ query bounds for the notions of EF1, PROP1, and $1/2$-MMS. 

In other, less related works, \citet{Feige2025low} considered the communication complexity of computing allocations that are MMS, EF1, or a relaxation of PROP1, measuring the number of \emph{bits} of information that the algorithm needs to elicit, rather than the number of queries. \citet{halpern2021fair} studied algorithms that take as input top-$k$ preference rankings, and aim to find the value of $k$ for which EF1 or MMS allocations can be achieved. The authors also consider the loss in social welfare (known as {\em price of fairness} in this context) due to the lack of expressiveness of the preferences. \citet{benade2022dynamic} considered a setting of dynamic (online) fair division, in which the goods arrive sequentially and only ordinal information about the preferences of the agents is revealed to the algorithm, and provide fairness guarantees with high probability when the values are drawn from known distributions. Finally, we remark that some works, e.g., \citep{bouveret2010fair,li2022proportional} study fair division settings with ordinal preferences, but, unlike ours, their goal is not to approximate a cardinal objective.

\section{Warm-up: Ordinal Algorithms}\label{sec:ordinal}
We start the presentation of our technical results by considering {\em ordinal} algorithms which have access to the {\em ordinal preference profile} $\succ=(\succ_i)_{i \in N}$ of the agents. To be exact, for each agent $i \in N$, such an algorithm takes as input a ranking $\succ_i$ of the goods that is consistent with the valuation function of the agent, that is, for any two goods $g_1, g_2 \in G$,  $g_1 \succ_i g_2$ implies that $v_i(g_1) \geq v_i(g_2)$.  \medskip

Before we present our main results for EFX, we remark that exact EF1 allocations can be computed using only such ordinal information. 
This is achieved by the \RR algorithm (\cref{alg:round-robin}), which is ordinal, and known to output EF1 allocations \citep{lipton2004approximately}. The \RR algorithm will be quite useful in the following sections, where it will be used as a subroutine.

\begin{algorithm}[!ht]
\caption{\RR}\label{alg:round-robin}
\begin{algorithmic}[1]
\small
\For {each agent $i \in N$}
    \State $X_i \gets \varnothing$ \Comment{Empty initial allocation.}
\EndFor
\State $\calP \gets G$ \Comment{Pool of available goods, initially all are available.}
\While{$\calP \neq \varnothing$} \Comment{As long as there are available goods, repeat.}
\For{each agent $i \in [n]$} \Comment{Consider the agents one by one in an arbitrary but fixed order.}
    \If{$\calP = \varnothing$} \Comment{If the pool of available goods is empty,}
        \State \textbf{break}   \Comment{stop.}
    \EndIf
    \State $g^*_i \gets $ top-ranked good in $\calP$ according to $\succ_i$ \Comment{Find $i$ top-ranked available good,}
    \State $X_i \gets X_i \cup \{g^*_i\}$. \Comment{add it to $i$'s bundle,}
    \State $\calP \gets \calP \setminus \{g^*_i\}$. \Comment{and remove it from the pool.}
\EndFor
\EndWhile
\State \Return $\bX = (X_1,\ldots,X_n)$.
\end{algorithmic}
\end{algorithm}

We now consider $\alpha$-EFX allocations and show a tight bound on the best possible value of $\alpha$ that can be achieved by ordinal algorithms. Unsurprisingly, this value is quite small due to the very limited amount of information that is used. We will explore how this bound can be improved by utilizing some additional, small amount of cardinal information in subsequent sections. We first present the positive result, which is achieved by a rather simple algorithm, coined \RRLA (\sRRLA); see \cref{alg:round-robin-and-rest}. As its name suggests, the algorithm performs one round of \RR for the first $n-1$ agents, and then assigns all the remaining unallocated goods to the last agent. The algorithm is clearly ordinal as it only requires the ordering of the goods to operate.

\begin{algorithm}[!ht]
\caption{\RRLA (\sRRLA)}\label{alg:round-robin-and-rest}
\begin{algorithmic}[1]
\small
\For {each agent $i \in N$}
    \State $X_i \gets \varnothing$ \Comment{Empty initial allocation.}
\EndFor
\State $\calP \gets G$ \Comment{Pool of available goods, initially all are available.}
\For{each agent $i \in \{1,\ldots,n-1\}$} \Comment{For all agents but the last one}
    \If{$\calP = \varnothing$} \Comment{If the pool of available goods is empty,}
        \State \textbf{break}   \Comment{stop.}
    \EndIf
    \State $g^*_i \gets $ top-ranked good in $\calP$ according to $\succ_i$ \Comment{Find $i$ top-ranked available good,}
    \State $X_i \gets X_i \cup \{g^*_i\}$. \Comment{add it to $i$'s bundle,}
    \State $\calP \gets \calP \setminus \{g^*_i\}$. \Comment{and remove it from the pool.}
\EndFor
\State $X_n \gets \calP$ \Comment{Agent $n$ receives all the remaining goods from the pool.}
\State \Return $\bX = (X_1,\ldots,X_n)$.
\end{algorithmic}
\end{algorithm}


\begin{theorem}
The \RRLA algorithm (\cref{alg:round-robin-and-rest}) computes a $\frac{1}{m-n}$-EFX allocation. 
\end{theorem}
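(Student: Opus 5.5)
Since $m \geq n$, every agent $1,\ldots,n-1$ receives exactly one good, and agent $n$ receives the remaining $m-n+1$ goods. I would prove $\alpha$-EFX with $\alpha = \frac{1}{m-n}$ by checking each ordered pair $(i,j)$, split into cases by whether $i$ or $j$ is the last agent.

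\emph{Case 1: $j \neq n$.} Here $|X_j| = 1$, so $X_j \setminus \{g\} = \varnothing$ for the unique $g \in X_j$. The requirement $v_i(X_i) \geq \alpha \cdot 0$ holds trivially, for any $i$, including $i=n$.

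\emph{Case 2: $i = n$.} Agent $n$ is never the envier of a larger bundle, since all other bundles are singletons. This is already covered by Case 1.

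\emph{Case 3: $i \neq n$ and $j = n$.} This is the only nontrivial case. Agent $i$ picked $g^*_i$ as her top-ranked good in the pool $\calP$ at her turn. Every good in $X_n$ was still in $\calP$ at that time, because goods leave the pool only when picked. By consistency of $\succ_i$ with $v_i$, we get $v_i(g^*_i) \geq v_i(g)$ for every $g \in X_n$.

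For any $g \in X_n$, the set $X_n \setminus \{g\}$ has exactly $m-n$ goods, each worth at most $v_i(g^*_i)$ to agent $i$. By additivity,
\[
v_i(X_n \setminus \{g\}) \leq (m-n)\, v_i(g^*_i) = (m-n)\, v_i(X_i),
\]
which rearranges to $v_i(X_i) \geq \frac{1}{m-n} v_i(X_n \setminus \{g\})$.

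\emph{Boundary cases.} If $m = n$, then $X_n$ is a singleton and the allocation is exactly EFX. The ratio $\frac{1}{m-n}$ should be read as $1$ there, or the theorem taken to assume $m > n$. The \textbf{break} in the algorithm never triggers, because $m \geq n$ leaves at least one good for each of the first $n-1$ turns.

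There is no real obstacle. The only point needing care is the observation that every good in $X_n$ was available when each earlier agent picked, which justifies the pointwise domination $v_i(g^*_i) \geq v_i(g)$.
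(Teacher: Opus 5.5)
Your proof is correct and follows the paper's own argument. Singleton bundles are trivially envied up to any good, and for an agent $i<n$ looking at $X_n$, every good of $X_n$ was still in the pool when $i$ picked, so each of the $m-n$ remaining goods is worth at most $v_i(X_i)$ to her. Your note on the degenerate case $m=n$, where the allocation is exactly EFX and $\frac{1}{m-n}$ is undefined, is a small point the paper leaves implicit.
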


\begin{proof}
To bound the EFX approximation achieved by the algorithm, first observe that $|X_i|\leq 1$ for any agent $i \in [n-1]$, and thus every agent $j \in N$ is EFX towards $i$. Now consider agent $n$, who receives all the remaining goods, and consider any other agent $i \in [n-1]$. Agent $i$ was allocated her (single) good when all of the goods in $X_{n}$ were in the pool $\calP$, and hence $v_i(X_i) \geq v_i(g)$ for any $g \in X_n$. Since $|X_n| \leq m-n+1$, it follows that $v_i(X_i) \geq \frac{1}{m-n} \cdot v_i(X_n\setminus\{g\})$ for any $g \in X_n$, and we obtain the desired EFX approximation.
\end{proof}

We next show the matching upper bound. 

\begin{theorem}
For any $m > n+2$, no ordinal algorithm can compute an $\alpha$-EFX allocation for any $\alpha<1/(m-n)$. 
\end{theorem}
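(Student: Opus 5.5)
We will use the same ranking for every agent and let the values be chosen adversarially after the allocation is fixed. Concretely, every agent reports $g_1 \succ g_2 \succ \dots \succ g_m$. A deterministic ordinal algorithm must then output one fixed allocation $\bX$, which depends only on this profile. Afterwards we are free to pick, for each agent, any valuation consistent with the ranking; ties are allowed. The goal is to show that for every possible $\bX$ some consistent valuation profile makes $\bX$ no better than $\frac{1}{m-n}$-EFX. Equivalently, no ordinal algorithm guarantees a factor strictly larger than $\frac{1}{m-n}$, which is what the statement intends.

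\textbf{Step 1: the extreme shape.} Suppose $n-1$ agents each receive one good and one agent $j$ receives the remaining $m-n+1$ goods. Give every agent value $1$ for every good, which is consistent with the ranking via ties. An agent $i$ with $|X_i|=1$ then has $v_i(X_i)=1$ and $v_i(X_j\setminus\{g\})=m-n$. So the allocation is at best $\frac{1}{m-n}$-EFX.

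\textbf{Step 2: some bundle is empty.} If $X_i=\varnothing$ for some agent $i$, then since $m\ge n$ some other agent $j$ has $|X_j|\ge 2$. All-equal values give $v_i(X_i)=0<\alpha\cdot v_i(X_j\setminus\{g\})$ for every $\alpha>0$, so no positive factor is achieved.

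\textbf{Step 3: all bundles nonempty, shape not extreme (the main obstacle).} Here the largest bundle has size at most $m-n$. Uniform values only give a ratio of about $\frac{1}{m-n-1}$, so they do not suffice, and a threshold-type valuation is needed that is highly skewed toward the top of the ranking. Let $i$ be the agent whose best good $g_p$ has the largest index $p$.
\begin{itemize}
\item If $p\ge n+1$, the $p-1\ge n$ goods above $g_p$ are split among the other $n-1$ agents. By pigeonhole some $j\neq i$ holds two of them. Set $v_i(g)=1$ for goods ranked above $g_p$ and $v_i(g)=\varepsilon$ for all others. Then $v_i(X_i)\le m\varepsilon$ while $v_i(X_j\setminus\{g\})\ge 1-\varepsilon$ for every $g$, and letting $\varepsilon\to 0$ drives the ratio to $0$.
\item If $p=n$, each agent holds exactly one of $g_1,\dots,g_n$, and the remaining $m-n$ goods are split so that at least two agents get extra goods. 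I would use a ``two-level'' valuation. I would pick $i$ to be an agent with few extras holding a low top-$n$ good, and $j$ an agent with many extras. I would then value $i$'s own goods near $\varepsilon$ relative to $j$'s lower goods, using ties placed just above $i$'s goods. The target is $v_i(X_j\setminus\{g\})>(m-n)\,v_i(X_i)$. The hypothesis $m>n+2$ should be what makes such a choice of $i,j$ available, since it guarantees $m-n\ge 3$ extra goods.
\end{itemize}
I expect this last sub-case to need a careful case analysis on how the $m-n$ extra goods are distributed relative to the order of the agents' top goods. That is the step I would work out first and where the argument could need adjustment.
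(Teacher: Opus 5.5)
Your Steps 1 and 2, and the sub-case $p\ge n+1$ of Step 3, are correct. The sub-case $p=n$, however, is left as an unexecuted plan (``I would use a two-level valuation\ldots''), so as written the proof is incomplete there.

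The missing observation is that this sub-case needs neither a two-level valuation nor any case analysis, and the hypothesis $m>n+2$ plays no role. The same one-threshold valuation from your first bullet already gives ratio $0$. What misled you is the quantifier. In your first bullet you made $v_i(X_j\setminus\{g\})$ large for \emph{every} $g\in X_j$, which is why you needed pigeonhole on two top goods. But violating $\alpha$-EFX only requires \emph{one} good $g\in X_j$ with $v_i(X_i)<\alpha\, v_i(X_j\setminus\{g\})$.

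Concretely, when $p=n$, let $i$ be the agent holding $g_n$. Set $v_i(g)=1$ for $g\in\{g_1,\ldots,g_{n-1}\}$ and $v_i(g)=0$ otherwise; this is consistent with the common ranking via ties, and gives $v_i(X_i)=0$. Since at least two agents receive extra goods, some $j\neq i$ holds a good $g_q$ with $q\le n-1$ together with an extra good $g'$. Then $v_i(X_j\setminus\{g'\})\ge v_i(g_q)=1>0$, so the allocation is $0$-EFX.

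This is exactly the paper's argument, which is organized more economically in two cases. Some agent gets none of $g_1,\ldots,g_{n-1}$. If any of these goods lies in a bundle of size at least two, the threshold valuation above gives ratio $0$. Otherwise these $n-1$ goods are singletons of distinct agents, the remaining agent receives all $m-n+1$ other goods, and all-ones values give $1/(m-n)$. That two-case split subsumes your Step 2 and both bullets of Step 3.
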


\begin{proof}
We will consider different valuation profiles that induce the same preference profile $\succ$ in which the rankings of all agents are identical, namely $\succ_i = \succ_j$ for all $i,j \in N$, with $\succ_i = g_1 \succ g_2 \succ \ldots \succ g_m$. Since the algorithm is ordinal, it cannot differentiate between the different valuation profiles, which, depending on the allocation $\bX$ computed by the algorithm, will allow us to choose an appropriate valuation profile $\bv$ that is consistent with $\succ$ and show an EFX approximation of at most $1/(m-n)$. 

Consider the top-$(n-1)$ ranked goods $\{g_1,\ldots,g_{n-1}\}$. Since there are $n$ agents, at least one of them is not assigned any of these goods; without loss of generality, suppose that agent $n$ is such an agent. We switch between the following two cases.

\medskip
\noindent{\bf Case 1: There is a good $g^* \in \{g_1,\ldots,g_{n-1}\}$ and an agent $i \in [n-1]$ such that $g \in X_i$ and $|X_i|\geq 2$.}
Consider a valuation profile $\bv$ consistent with $\succ$ such that, for every agent $j \in N$,
\begin{align*}
    v_j(g) = 
    \begin{cases}
        1, \text{ if } g \in \{g_1,\ldots,g_{n-1}\} \\
        0, \text{ otherwise.}
    \end{cases}
\end{align*}
Then, for agent $n$, since $X_n \cap \{g_1,\ldots,g_{n-1}\} = \varnothing$, we have that $v_n(X_n) = 0$ and there is a good $g \in X_i \setminus\{g^*\}$ such that $v_n(X_i \setminus \{g\}) = 1$. This implies that $\bX$ is $0$-EFX. 

\medskip
\noindent{\bf Case 2: For every good $g \in \{g_1,\ldots,g_{n-1}\}$, there an $i\in[n-1]$ such that $X_i = \{g\}$.}
Consequently, agent $n$ is allocated all $m-n+1$ goods besides those in $\{g_1,\ldots,g_{n-1}\}$. 
Consider then a valuation profile $\bv$ consistent with $\succ$ such that $v_j(g)=1$ for any agent $j \in N$ and good $g \in G$. For any agent $i \in [n-1]$, we have $v_i(X_i)=1$ and $v_i(X_n \setminus \{g\}) = n-m$ for any good $g \in X_n$. This implies that $\bX$ is $1/(m-n)$-EFX and the proof is complete. 
\end{proof}

\section{Query-Enhanced Algorithms} \label{sec:binary-search}
We now turn our attention to a class of {\em query-enhanced algorithms} that have access to the preference profile $\succ$ of the agents and are allowed an additional number $k \in [m]$ of \emph{value queries} per agent, which they can use to acquire more accurate information about the valuation functions. Formally, a value query takes as input an agent $i$ and a good $g$, and returns the value $v_i(g)$ of $i$ for $g$. Clearly, an algorithm that can make $k=m$ queries per agent is equivalent to an algorithm that has complete access to the valuation functions of the agents, whereas an algorithm that makes $k=0$ queries is an ordinal algorithm. Our goal is to design algorithms that make a small number of queries and compute $\alpha$-EFX allocations with an $\alpha$ as large as possible, regardless of the unknown part of the valuation functions of the agents.

Our first result in this setting is that by making $O(n+k\log{m})$ queries per agent, it is possible to achieve a $\frac{\rho}{2m^{1/(k+1)}}$-EFX allocation, where $\rho$ is the EFX approximation achieved by a fully-informed algorithm (i.e., one which has access to the values of all agents for all goods), which we use as a \emph{black-box}. 
In more detail, our algorithm, coined \textsc{$\rho$-Virtual-EFX} (see \cref{alg:algorithm:virtual}) works as follows: It first defines $k+1$ thresholds, $\theta_\ell = m^{-\ell/(k+1)}$ for $\ell \in \{0,\ldots,k\}$, and 
then it computes a {\em virtual} valuation function for every agent $i$ based on these thresholds. 
In particular, we first learn agent $i$'s top $n-1$ values $v_{i,1}, \ldots, v_{i,n-1}$ by querying for the goods that agent $i$ ranks in the first $n-1$ positions. 
Afterwards, for each $\ell \in [k]$, the algorithm performs binary search to find the set $S_\ell$ of all goods that agent $i$ values at least $v_{i,n-1}\cdot\theta_\ell$. The virtual value of all these goods is then defined to be this lower bound, $v_{i, n-1} \cdot \theta_\ell$. The algorithm outputs the allocation that is computed by the full-information $\rho$-EFX algorithm when given as input the virtual valuation functions. Our main result implies several interesting corollaries by choosing the value of $k$ appropriately.


\begin{algorithm}
\caption{\textsc{$\rho$-Virtual-EFX}}\label{alg:algorithm:virtual}
\begin{algorithmic}[1]
\small
\For{each $\ell \in \{0,\ldots,k\}$}
    \State $\theta_\ell \gets m^{-\ell/(k+1)}$
\EndFor
\For{every agent $i$}
    \State Query agent $i$ for her top-$(n-1)$ goods $g_{i,1}, \ldots, g_{i,n-1}$ to learn the values $v_{i,1}, \ldots, v_{i,n-1}$.
    \For{each $\ell \in [k]$}
        \State Run a binary search to learn a set of goods 
        $S_{i,\ell} \gets  \{g: v_{i,n-1}\cdot \theta_\ell \leq v_i(g) < v_{i,n-1} \theta_{\ell-1} \}$.
    \EndFor
    \State $S_{i,k+1} \gets$ set of remaining goods
    \State Define the virtual valuation $\tilde{v}_i$ such that 
        \begin{align*}
            \tilde{v}_i(g) \gets 
            \begin{cases}
                v_{i,\ell} & \text{if $g=g_{i,\ell}, \ell\in [n-1]$} \\
                v_{i,n-1} \theta_\ell & \text{if $g \in S_{i,\ell}, \ell \in [k]$} \\
                0 & \text{if $g \in S_{i,k+1}$}.
            \end{cases}
        \end{align*}
\EndFor
    \State Run the $\rho$-EFX algorithm with the virtual valuations as input, and return this allocation $\bX = (X_i)_i$. 
\end{algorithmic}
\end{algorithm}

\begin{theorem} \label{thm:virtual}
For any $k \geq 1$, given any $\rho$-EFX full-information algorithm, the \textnormal{\textsc{$\rho$-EFX-Virtual}} algorithm (\cref{alg:algorithm:virtual}) makes $O(n+k\log{m})$ queries per agent and computes a $\frac{\rho}{2m^{1/(k+1)}}$-EFX allocation.
\end{theorem}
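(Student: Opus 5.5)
The plan has two parts: count the queries, then compare true and virtual values and show that the one error term the virtual valuation cannot see (the goods rounded down to $0$) is small against agent $i$'s own bundle.

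\textbf{Query count.} Querying agent $i$'s top $n-1$ goods costs $n-1$ queries. For each $\ell\in[k]$, the set $\{g : v_i(g)\ge v_{i,n-1}\theta_\ell\}$ is a prefix of $\succ_i$, because $\succ_i$ is consistent with $v_i$. So its boundary can be located by binary search over the ranking with $O(\log m)$ value queries. Each $S_{i,\ell}$ is the difference of two consecutive prefixes, which gives $O(n+k\log m)$ queries per agent in total.

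\textbf{Virtual versus true values.} Write $c=m^{1/(k+1)}$, so that $\theta_{\ell-1}=c\,\theta_\ell$. Every good outside $i$'s top $n-1$ satisfies $v_i(g)\le v_{i,n-1}=v_{i,n-1}\theta_0$. Hence for $g\in S_{i,\ell}$ with $\ell\in[k]$ we have $\tilde v_i(g)\le v_i(g)<c\,\tilde v_i(g)$, and the top $n-1$ goods are represented exactly. Every $g\in S_{i,k+1}$ has $v_i(g)<v_{i,n-1}\theta_k=v_{i,n-1}\,c/m$, and there are at most $m$ such goods. Therefore, for any bundle $Y$,
\[
\tilde v_i(Y)\le v_i(Y)\le c\,\tilde v_i(Y)+c\,v_{i,n-1}.
\]

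\textbf{EFX argument.} Fix agents $i\neq j$ and $g\in X_j$. Since the black-box allocation is $\rho$-EFX with respect to $\tilde{\mathbf v}$, and the EFX condition applies to every good, including ones of virtual value $0$, we get $\tilde v_i(X_j\setminus\{g\})\le \tilde v_i(X_i)/\rho\le v_i(X_i)/\rho$. It then suffices to show $v_{i,n-1}\le v_i(X_i)/\rho$, after which
\[
v_i(X_j\setminus\{g\})\le c\,\tilde v_i(X_j\setminus\{g\})+c\,v_{i,n-1}\le \frac{2c}{\rho}\,v_i(X_i),
\]
which is the claimed bound.

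To prove $v_{i,n-1}\le v_i(X_i)/\rho$, I will split into three cases according to where agent $i$'s top $n-1$ goods lie.
\begin{itemize}
\item If $X_i$ contains one of them, then $v_i(X_i)\ge v_{i,n-1}$.
\item Otherwise all $n-1$ top goods lie in the $n-1$ other bundles. If some bundle $X_{j'}$ contains a top good $g'$ together with at least one other good $h$, then applying $\rho$-EFX under $\tilde v_i$ and removing $h$ gives $\tilde v_i(X_i)\ge\rho\,\tilde v_i(g')\ge \rho\, v_{i,n-1}$.
\item In the remaining case, the $n-1$ top goods are spread over the $n-1$ other bundles and none of these bundles has a second good. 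So every other bundle is a singleton, $X_j\setminus\{g\}=\varnothing$, and the EFX condition holds trivially.
\end{itemize}

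I expect this step to be the main obstacle: the goods rounded to virtual value $0$ can have total true value up to $c\,v_{i,n-1}$, and the pigeonhole case split is what ties $v_{i,n-1}$ to agent $i$'s own bundle.
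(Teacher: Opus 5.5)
Your proof is correct and follows essentially the same route as the paper. Both arguments rest on three ingredients: the sandwich $\tilde v_i \le v_i \le c\,\tilde v_i + c\,v_{i,n-1}$, the black-box $\rho$-EFX guarantee applied to $\tilde{\mathbf v}$, and the location of the top-$(n-1)$ goods, which forces $\tilde v_i(X_i)\ge \rho\, v_{i,n-1}$ unless every other bundle is a singleton. Where they differ is presentation: the paper, taking agent $1$'s perspective, runs an induction over her top goods $g_{1,1},\ldots,g_{1,n-2}$, splits on whether the minimum virtual value in the envied bundle is zero, and bounds the whole envied bundle, whereas your direct three-way pigeonhole and per-good bound on $X_j\setminus\{g\}$ reach the same conclusion more cleanly.
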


\begin{proof}
For any agent $i$, the algorithm queries the top-$(n-1)$ values of $i$ and then runs $k$ binary searches for the remaining $m-n$ goods to construct the sets $S_\ell$ for $\ell \in [k+1]$. So, the algorithm indeed makes $O(n+k\log{m})$ queries. 

We now focus on showing the EFX approximation bound. First observe that, by the definition of the $\theta$ thresholds, for any $\ell \in [k]$, we have 
$$\frac{\theta_{\ell-1}}{\theta_\ell} = \frac{ m^{-(\ell-1)/(k+1)}}{ m^{-\ell/(k+1)}} = m^{1/(k+1)} = \theta_1^{-1}.$$ 
Without loss of any generality, we will take the perspective of agent $1$ and show that she is approximately-EFX towards any other agent $i\in\{2,\ldots,n\}$; clearly, if $X_i$ is a singleton, agent $1$ is EFX towards $i$, and we can thus focus on the case where $X_i$ includes at least two goods. 

For any agent $i \in [n]$ and integer $\ell \in [k+1]$, let $\alpha_{i,\ell} = |X_i\cap S_{1,\ell}|$ be the number of goods that agent $i$ gets from the set $S_{1,\ell}$. Since agent $1$ is $\rho$-EFX towards any agent $i \in \{2,\ldots,n\}$ for the virtual valuation $\tilde{v}_1$, by the definition of the algorithm, we have that $\tilde{v}_1(X_1) \geq \rho \cdot \tilde{v}_1(X_i\setminus\{g\})$ for any $g \in X_i$. Let $I_E$ be the indicator variable that is $1$ if the event $E$ is true, and $0$ otherwise. 
For the virtual valuation of agent $1$, for any agent $i \in [n]$ (including $1$), we have
\begin{align*}
    \tilde{v}_1(X_i) = \sum_{\ell=1}^{n-1} I_{g_{1,\ell} \in X_i} \cdot v_{1,\ell} + \sum_{\ell=1}^k \alpha_{i,\ell}\cdot v_{1,n-1}\theta_\ell 
\end{align*}
For the true valuation of agent $1$, since the virtual values are at most the true values, we have
\begin{align*}
    v_1(X_1) \geq \tilde{v}_1(X_1)
\end{align*}
and, for any $i \in \{2,\ldots,n\}$, 
\begin{align*}
    v_1(X_i) \leq \sum_{\ell=1}^{n-1} I_{g_{1,\ell} \in X_i} \cdot v_{1,\ell} + \sum_{\ell=1}^k \alpha_{i,\ell}\cdot v_{1,n-1} \theta_{\ell-1} 
    + \alpha_{i,k+1}\cdot v_{1,n-1} \theta_k. 
\end{align*}
Since $\theta_{\ell-1} = \frac{\theta_{\ell-1}}{\theta_\ell} \cdot \theta_\ell = \theta_1^{-1}\cdot \theta_\ell$ for any $\ell \in [k]$, and $1 \leq \theta_1^{-1}$, we can rewrite the upper bound on $v_1(X_i)$ as 
\begin{align*}
    v_1(X_i) 
    &\leq \theta_1^{-1} \bigg( \sum_{\ell=1}^{n-1} I_{g_{1,\ell} \in X_i} \cdot v_{1,\ell} + \sum_{\ell=1}^k \alpha_{i,\ell}\cdot v_{1,n-1} \theta_\ell \bigg) + \alpha_{i,k+1}\cdot v_{1,n-1} \theta_k \\
    &= \theta_1^{-1} \cdot \tilde{v}_1(X_i) + \alpha_{i,k+1}\cdot v_{1,n-1} \theta_k.
\end{align*}
Using the $\rho$-EFX property of the virtual valuation function, that is, $\tilde{v}_1(X_i) \leq \frac{1}{\rho} \tilde{v}_1(X_1) + \min_{g \in X_i} \tilde{v}_1(g)$, we finally obtain
\begin{align*}
    v_1(X_i) 
    &\leq  \theta_1^{-1} \cdot \bigg( \frac{1}{\rho} \tilde{v}_1(X_1) + \min_{g \in X_i} \tilde{v}_1(g) \bigg) + \alpha_{i,k+1}\cdot v_{1,n-1} \theta_k.
\end{align*}
The EFX approximation is
\begin{align} \label{eq:virtual:efx-approximation}
    \frac{v_1(X_1)}{v_1(X_i)} \geq \frac{ \tilde{v}_1(X_1) }{ \theta_1^{-1} \cdot \bigg( \frac{1}{\rho} \tilde{v}_1(X_1) + \min_{g \in X_i} \tilde{v}_1(g) \bigg) + \alpha_{i,k+1}\cdot v_{1,n-1} \theta_k. }
\end{align}
Observe that the last expression is increasing in $\tilde{v}_1(X_1)$. 

We will now argue by induction that, in the worst case, the allocation $\bX$ computed by the algorithm is such that each of the top-$(n-2)$ goods $\{g_{1,1}, \ldots, g_{1,n-2}\}$ of agent $1$ is assigned to a different agent from the set $J = [n]\setminus\{1,i\}$, and each of these agents is only assigned that good, i.e., $|X_j|=1$ for any $j \in J$. In particular, there is an ordering $\{j_1, \ldots, j_{n-2}\}$ of the agents in $J$ such that $X_{j_\ell} = \{g_{1,\ell} \}$ for any $\ell \in [n-2]$.

\medskip
\noindent 
{\bf Base Case:} good $g_{1,1}$. 
\begin{itemize}
    \item If $g_{1,1} \in X_1$, then, by definition, $\tilde{v}_1(X_1) \geq v_{1,1}$. 
    \item If $g_{1,1} \in X_i$, then, since $|X_i|\geq 2$ and agent $1$ is $\rho$-EFX towards agent $i$ for $\tilde{v}_1$, $\tilde{v}_1(X_1) \geq \rho \cdot v_{1,1}$.
    \item If $g_{1,1} \in X_{j_1}$ for some $j_1 \in [n]\setminus\{1,i\}$ and $|X_{j_1}|\geq 2$, then, since agent $1$ is $\rho$-EFX towards agent $j_1$ for $\tilde{v}_1$, $\tilde{v}_1(X_1) \geq \rho \cdot v_{1,1}$.
\end{itemize}
In all these cases, we have $\tilde{v}_1(X_1) \geq \rho \cdot v_{1,1}$, and \eqref{eq:virtual:efx-approximation} gives us that the EFX approximation can be bounded as follows:
\begin{align*}
    \frac{v_1(X_1)}{v_1(X_i)} \geq \frac{ \rho \cdot v_{1,1} }{ \theta_1^{-1} \cdot \bigg( v_{1,1} + \min_{g \in X_i} \tilde{v}_1(g) \bigg) + \alpha_{i,k+1}\cdot v_{1,n-1} \theta_k }.
\end{align*}
If $\min_{g \in X_i} \tilde{v}_1(g) > 0$, then, since $\tilde{v}_1(g)=0$ for any  $g\in S_{1,k+1}$, it must be the case that $\alpha_{i,k+1} = 0$. Using also the facts that $\theta_1^{-1} = m^{1/(k+1)}$ and $\min_{g \in W_i} \tilde{v}_1(g) \leq v_{1,1}$, we have 
\begin{align*}
    \frac{v_1(X_1)}{v_1(X_i)} \geq \frac{ \rho \cdot v_{1,1} }{ \theta_1^{-1} \cdot \big( v_{1,1} + v_{1,1} \big)} = \frac{\rho}{2m^{1/(k+1)}}.
\end{align*}
Otherwise, since $\alpha_{i,k+1}\leq m$, $v_{1,n-1} \leq v_{1,1}$ and $\theta_k = m^{-k/(k+1)}$, we have
\begin{align*}
    \frac{v_1(X_1)}{v_1(X_i)} 
    &\geq \frac{ \rho \cdot v_{1,1} }{ \theta_1^{-1} \cdot v_{1,1} + \alpha_{i,k+1}\cdot v_{1,n-1} \theta_k } \\
    &= \frac{ \rho \cdot v_{1,1} }{ m^{1/(k+1)} \cdot v_{1,1}  + m\cdot v_{1,1} m^{-k/(k+1)} }
    = \frac{ \rho }{ 2 m^{1/(k+1)} }.
\end{align*}
So, it must be the case that $X_{j_1} = \{g_{1,1}\}$.

\bigskip
\noindent 
{\bf Inductive Hypothesis:} For some $\lambda \in \{1,\ldots,n-3\}$, $X_{j_\ell} = \{g_{1,\ell}\}$ for any $\ell \in [\lambda]$.

\bigskip
\noindent 
{\bf Induction Step:} good $g_{1,\lambda+1}$.
\begin{itemize}
    \item If $g_{1,\lambda+1} \in X_1$, then, by definition, $\tilde{v}_1(X_1) \geq v_{1,\lambda+1}$. 
    \item If $g_{1,\lambda+1} \in X_i$, then, since $|X_i|\geq 2$ and agent $1$ is $\rho$-EFX towards agent $i$ for $\tilde{v}_1$, $\tilde{v}_1(X_1) \geq \rho \cdot v_{1,\lambda+1}$.
    \item If $g_{1,\lambda+1} \in W_{j_1}$ for $j_{\lambda+1} \in [n]\setminus\{1,i,j_1, \ldots,j_\lambda\}$ and $|X_{j_{\lambda+1}}|\geq 2$, then, since agent $1$ is $\rho$-EFX towards agent $j_{\lambda+1}$ for $\tilde{v}_1$, $\tilde{v}_1(W_1) \geq \rho \cdot v_{1,\lambda+1}$.
\end{itemize}
In all these cases, we have $\tilde{v}_1(W_1) \geq \rho \cdot v_{1,\lambda+1}$, and \eqref{eq:virtual:efx-approximation} gies us that the EFX approximation can be bounded as follows:
\begin{align*}
    \frac{v_1(X_1)}{v_1(X_i)} \geq \frac{ \rho \cdot v_{1,\lambda+1} }{ \theta_1^{-1} \cdot \bigg( v_{1,\lambda+1} + \min_{g \in X_i} \tilde{v}_1(g) \bigg) + \alpha_{i,k+1}\cdot v_{1,n-1} \theta_k }
\end{align*}
If $\min_{g \in X_i} \tilde{v}_1(g) > 0$, then $\alpha_{i,k+1} = 0$. Since $\theta_1^{-1} = m^{1/(k+1)}$ and $\min_{g \in X_i} \tilde{v}_1(g) \leq v_{1,\lambda+1}$, we have 
\begin{align*}
    \frac{v_1(X_1)}{v_1(X_i)} \geq \frac{ \rho \cdot v_{1,\lambda+1} }{ m^{1/(k+1)} \cdot \bigg( v_{1,\lambda+1} + v_{1,\lambda+1} \bigg)}
    = \frac{ \rho }{ 2 m^{1/(k+1)} }.
\end{align*}
Otherwise, by the induction hypothesis, the first top $\lambda$ goods of agent $1$ are assigned as singletons to $\lambda$ agents different than agent $1$ and agent $i$. Hence, $\min_{g \in X_i} \tilde{v}_1(g) \leq v_{1,\lambda+1}$. In addition, $\alpha_{i,k+1}\leq m$, $v_{1,n-1} \leq v_{1,\lambda+1}$ and $\theta_k = m^{-k/(k+1)}$. Hence, 
\begin{align*}
    \frac{v_1(X_1)}{v_1(X_i)} \geq \frac{ \rho \cdot v_{1,\lambda+1} }{ m^{1/(k+1)} \cdot v_{1,\lambda+1} + m\cdot v_{1,\lambda+1} m^{-k/(k+1)} }
    = \frac{ \rho }{ 2 m^{1/(k+1)} }.
\end{align*}
So, it must be the case that $X_{j_{\lambda+1}} = \{g_{1,\lambda+1}\}$ and the induction is complete. 

\medskip
\noindent 
By the above, we can assume that agents $1$ and $i$ are only given goods that agent $1$ ranks at positions in $\{n-1,\ldots,m\}$. 
We have the following two cases:
\begin{itemize}
    \item If $\min_{g \in X_i} \tilde{v}_1(g) > 0$, it must be the case that $\alpha_{i,k+1} = 0$. 
    Therefore, by \eqref{eq:virtual:efx-approximation}, the EFX approximation is
    \begin{align*}
    \frac{v_1(X_1)}{v_1(X_i)} \geq \frac{ \tilde{v}_1(X_1) }{ \theta_1^{-1} \cdot \bigg( \frac{1}{\rho} \tilde{v}_1(X_1) + \min_{g \in W_i} \tilde{v}_1(g) \bigg) }.
    \end{align*}    
    Since $X_i$ contains at least two goods, by the $\rho$-EFX property of $\tilde{v}_1$, we have that
    \begin{align*}
        \tilde{v}_1(X_1) 
        &\geq \rho \bigg(\tilde{v}_1(X_i) - \min_{g \in X_i} \tilde{v}_1(g) \bigg) \\
        &\geq \rho \cdot \min_{g \in X_i} \tilde{v}_1(g).
    \end{align*}
    Hence, 
    \begin{align*}
    \frac{v_1(X_1)}{v_1(X_i)} \geq 
    \frac{ \rho \cdot \min_{g \in X_i} \tilde{v}_1(g) }{ \theta_1^{-1} \cdot \bigg( \frac{1}{\rho} \rho \cdot \min_{g \in X_i} \tilde{v}_1(g) + \min_{g \in X_i} \tilde{v}_1(g) \bigg) } 
    = \frac{\rho}{2 m^{1/(k+1)}}.
    \end{align*}

    \item If $\min_{g \in X_i} \tilde{v}_1(g) = 0$,
    then, by \eqref{eq:virtual:efx-approximation}, the EFX approximation is
    \begin{align*}
    \frac{v_1(X_1)}{v_1(X_i)} \geq \frac{ \tilde{v}_1(X_1) }{ \theta_1^{-1} \cdot \frac{1}{\rho} \tilde{v}_1(X_1) + \alpha_{i,k+1}\cdot v_{1,n-1}\theta_k }.
    \end{align*}
    Observe now that it must be the case that $\tilde{v}_1(X_1) \geq \rho v_{1,n-1}$. 
    Indeed, as shown above, we have that $X_{j_\ell} = \{g_{1,\ell}\}$ for any $\ell \in [n-2]$. 
    Consequently, good $g_{1,n-1}$ is assigned to either agent $1$ or agent $i$. 
    In the former case, we have that $\tilde{v}_1(X_1) \geq v_{1,n-1}$. 
    In the latter case, due to the $\rho$-EFX property of $\tilde{v}_1$, we have that 
    $\tilde{v}_1(X_1) \geq \rho \bigg( v_{1,n-1} - \min_{g \in X_i} \tilde{v}_1(g) \bigg) = \rho v_{1,n-1}$.
    Using this, the definition of the thresholds, and the inequality $\alpha_{i,k+1} \leq m$, we obtain
    \begin{align*}
    \frac{v_1(X_1)}{v_1(X_i)} &\geq \frac{ \rho v_{1,n-1} }{ m^{1/(k+1)} \cdot \frac{1}{\rho}\rho v_{1,n-1} + m\cdot v_{1,n-1} m^{-k/(k+1)} } \\
    &= \frac{\rho}{2 m^{1/(k+1)}}.
    \end{align*}    
\end{itemize}
This completes the proof.
\end{proof}

Using the best-known $\rho$-EFX algorithm and Theorem~\ref{thm:virtual}, we obtain the following corollary, which shows that we can compute a constant-EFX allocation by making a rather small number of queries per agent.  

\begin{corollary} \label{cor:general-positive-constant}
With $O(n + \log^2{m})$ queries per agent, we can compute a $\frac{\phi-1}{2(1+\varepsilon)}$-EFX allocation, for any constant $\varepsilon>0$.
\end{corollary}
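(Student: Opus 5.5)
The corollary follows directly from \cref{thm:virtual}: we choose the black-box algorithm and the parameter $k$, then check the approximation factor and the query count.

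For the black box, we take the polynomial-time $(\phi-1)$-EFX algorithm of \citet{ANM2019}, so $\rho = \phi-1$. It needs full knowledge of valuations, and we feed it the virtual valuations $\tilde{v}_i$ constructed by \textsc{$\rho$-Virtual-EFX}. These are legitimate additive valuations, since each good gets a non-negative number, so the black box applies without change.

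Next, fix a constant $\varepsilon>0$ and set $k = \lceil \log m/\log(1+\varepsilon)\rceil - 1$, so that $k+1 \geq \log_{1+\varepsilon} m$. Then
\begin{align*}
m^{1/(k+1)} = 2^{\log m/(k+1)} \leq 2^{\log(1+\varepsilon)} = 1+\varepsilon .
\end{align*}
By \cref{thm:virtual}, the output is therefore $\frac{\phi-1}{2m^{1/(k+1)}} \geq \frac{\phi-1}{2(1+\varepsilon)}$-EFX. If $m$ is so small that this $k$ is below $1$, we take $k=1$ instead; then $m^{1/2}\le 1+\varepsilon$ holds directly, since $m$ is bounded by a constant.

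For the query count, \cref{thm:virtual} gives $O(n + k\log m)$ queries per agent. Since $\varepsilon$ is constant, $k = O(\log m/\log(1+\varepsilon)) = O(\log m)$, and the total is $O(n+\log^2 m)$.

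There is no real obstacle in this argument. The only care needed is the choice of $k$: it must be large enough that $m^{1/(k+1)}$ collapses to a constant arbitrarily close to $1$, and the hidden constant $1/\log(1+\varepsilon)$ must be absorbed into the $O(\cdot)$, which is valid precisely because $\varepsilon$ is a fixed constant.
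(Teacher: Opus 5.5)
Your proof is correct and is essentially the paper's argument. The paper takes $\rho=\phi-1$ from \citet{ANM2019} and sets $k=\lambda\log m-1$ for a constant $\lambda$ with $e^{1/\lambda}\le 1+\varepsilon$. Your choice is this one with $\lambda=1/\ln(1+\varepsilon)$, and your ceiling and $k=1$ fallback simply make the integrality and $k\ge 1$ requirements explicit. In the fallback case, the precise reason is not that $m$ is bounded by a constant but that $k+1=2\ge\log_{1+\varepsilon}m$ still holds, which gives $m^{1/2}\le 1+\varepsilon$.
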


\begin{proof}
The best-known approximate-EFX algorithm that works for any number of agents and without any restrictions on the valuation functions is due to \citet{ANM2019} and achieves an EFX approximation factor $\rho = \phi-1\approx 0.618$. Let $\lambda$ be a constant and $k=\lambda \cdot \log{m}-1$. Then, Theorem~\ref{thm:virtual} implies that with $O(n + \log^2{m})$ queries per agent, we can compute an allocation that achieves an EFX approximation of $\frac{\rho}{2 m^{1/(\lambda\log{m})}} = \frac{\phi-1}{2 e^{1/\lambda}}$. For constant $\lambda$, since $\varepsilon$ is constant, we have $e^{1/\lambda} \leq 1+\varepsilon$. Therefore, the EFX approximation that we can achieve is at least $\frac{\phi-1}{2(1+\varepsilon)}$.
\end{proof}

We complement the above positive result with a negative one, which, as we will see later, implies (as a corollary) an impossibility on the necessary number of queries per agents to compute constant-EFX allocations. 

\begin{theorem}\label{thm:negative}
For any $k \geq 1$ such that $\sqrt{k} \leq m^{1/(2k-1)}$, no $k$-query algorithm can compute an allocation that is $\omega\left(\frac{\sqrt{k}}{m^{1/(2k-1)}}\right)$-EFX.
\end{theorem}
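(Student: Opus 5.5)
The plan is an adversary argument on instances where all agents share the ranking $g_1 \succ g_2 \succ \cdots \succ g_m$, as in the proof of the ordinal impossibility result. I will first reduce to two relevant agents. For general $n$, I give the top $n-2$ goods a huge common value, so that any allocation which is not already very bad must hand them out as singletons to $n-2$ agents. This mirrors the inductive argument in the proof of \cref{thm:virtual}, now used in reverse. What remains is two agents sharing the goods $g_{n-1},\ldots,g_m$, and both agents have the same (unknown) valuation. Since the algorithm may be adaptive, the adversary answers queries online, and only at the end commits to a valuation profile consistent with all answers and with the ranking.

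The adversary answers according to a fixed decreasing ``profile'' over rank positions. I intend to use geometric levels $1, r^{-1}, r^{-2}, \ldots$ with $r$ to be tuned, roughly $r \approx m^{2/(2k-1)}$. A query at position $p$ is answered with the level assigned to $p$. The level assignment is chosen so that the positions at which the value drops by a factor $r$ are spaced with block lengths growing geometrically, by a factor of roughly $r$ per level; the exponent $1/(2k-1)$ is meant to come from this. After $k$ queries per agent, at most $2k$ queried positions in total split the ranking into at most $2k+1$ maximal unqueried gaps. Inside a gap between queried positions $p<q$, any non-increasing assignment of values in $[v(q),v(p)]$ is consistent with what the algorithm has seen. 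In particular, the adversary may set all goods in the gap to the upper value $v(p)$, all of them to the lower value $v(q)$, or use a threshold split.

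Now fix the allocation $(X_1,X_2)$ output by the algorithm on these answers. I consider the agent whose bundle looks weaker and choose the completion hurting her most: every unqueried good in her own bundle is pushed down to its gap's lower endpoint, and every unqueried good in the other bundle is pushed up to its gap's upper endpoint. Then I bound $v(X_1)/v(X_2\setminus\{g\})$ for the least valuable $g \in X_2$. The key counting step is a pigeonhole over the levels. Because there are more levels or blocks than the algorithm can ``pin'' with its queries, some block of length about $r^{j}$ is never queried. The adversary can then make that whole block worth about $r^{j}\cdot r^{-j}$ to the envious agent's opponent, while the envious agent's share of it collapses to the next level, costing a factor $r$. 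Balancing the number of goods per block against the value ratio $r$ should produce an envy ratio of order $\sqrt{k}/m^{1/(2k-1)}$. The hypothesis $\sqrt{k}\le m^{1/(2k-1)}$ is exactly what makes this quantity at most $1$, and it also keeps block lengths at least $1$.

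I expect the main obstacle to be the precise choice of the level profile and of the block lengths so that the pigeonhole step yields exactly $\sqrt{k}\cdot m^{-1/(2k-1)}$ and not a weaker $m^{-1/k}$-type bound. In particular, I need to handle an adaptive algorithm that spends its queries on both agents and splits each block cleverly between the two bundles. My first attempt will fix block sizes $b_j \propto r^{j}$ with $\sum_j b_j \le m$. I will then show that any allocation which is $c$-EFX for both extreme completions of every unqueried block forces at least two queries per ``defended'' level. That leaves roughly $2k-1$ effective levels, and the resulting optimization over $r$ then produces the square root.
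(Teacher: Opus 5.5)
There is a genuine gap. It comes from giving the two remaining agents one common valuation and answering queries by rank position only. Both agents' queries then constrain the same function, so the algorithm effectively gets $2k$ queries on it, and it can exploit this. Spend agent~1's and agent~2's queries on the top remaining good $g_{n-1}$ and on the tops of $2k-1$ further blocks of sizes $m^{(2\ell-1)/(4k-1)}$. Then apply the \PRR rule with these $2k$ blocks: give $g_{n-1}$ to agent~1 as a singleton if it dominates every scaled block top, and otherwise run \textsc{RoundRobin} between the two agents. Every completion your adversary may commit to agrees with all $2k$ answers for \emph{both} agents. Hence the proof of \cref{thm:n-agents-k-queries} with $n=2$ and $2k$ queries goes through. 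The $n-2$ huge singletons are trivially fine, so this algorithm is $\Omega\bigl(m^{-1/(4k-1)}/\sqrt{k}\bigr)$-EFX on your whole instance family. For constant $k$ this beats $\sqrt{k}\,m^{-1/(2k-1)}$ polynomially (e.g.\ $m^{-1/7}$ versus $m^{-1/3}$ for $k=2$). So no tuning of $r$ and of the block lengths can rescue your pigeonhole step. In particular, your claim that each defended level costs two queries is false: a single query at a block's top defends it. Two smaller issues: pushing her unqueried goods down and the opponent's up within a gap is not consistent with the common ranking when the two bundles interleave there, and the counting that should yield the bound is never carried out.

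What is missing is the fact that each agent's valuation is constrained only by her \emph{own} $k$ queries. This must be combined with a construction that forces a rigid allocation shape before any counting. The paper reveals value $\sqrt{k}$ (not a huge value) on the top $n-1$ goods and $m^{-2\ell/(2k-1)}$ on a group $S_\ell$ of $m^{(2\ell-1)/(2k-1)}$ goods for $\ell\in[k-1]$. It reveals $0$ on the remaining set $B$, so each $S_\ell$ is worth $m^{-1/(2k-1)}$. Some agent, say $n$, receives none of the top $n-1$ goods. If one of them lies in a bundle of size at least two, agent $n$ (with the revealed values) has at most $(k-1)m^{-1/(2k-1)}$. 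Yet she values that bundle, after removing a good other than the top one, at least $\sqrt{k}$. Otherwise all top goods are singletons, and agent $n$ holds $S_1\cup\cdots\cup S_{k-1}\cup B$. Agent $n-1$, with $X_{n-1}=\{g_{n-1}\}$, cannot query both $g_{n-1}$ and some good in each of the $k$ groups. Either $v_{n-1}(g_{n-1})$ is lowered to $m^{-2/(2k-1)}$, or an unqueried group is raised to the preceding level. In the second case that group is worth $\Omega(m^{1/(2k-1)})$ inside $X_n$ even after removing a least valuable good. The factor $\sqrt{k}$ comes from balancing these two cases through the value of the top goods, not from an optimization over the level ratio $r$.
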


\begin{proof}
We consider instances in which the $n$ agents have the same ranking over the $m$ goods that is of the form 
$$g_1 \succ g_2 \succ ... \succ g_{n-1} \succ S_1 \succ \ldots \succ S_{k-1} \succ B.$$ 
In particular, for any $\ell \in [n-1]$, the agents rank the same good $g_\ell$ at position $\ell$. 
For any $\ell \in [k-1]$, set $S_\ell$ consists of $|S_\ell| = m^{(2\ell-1)/(2k-1)}$ goods. 
The last set $B$ consists of the remaining $|B| = m- n + 1 - \sum_{\ell=1}^{k-1} |S_\ell|$ goods; we choose $m$ large enough so that $|B| \geq \lambda m$ for some constant $\lambda \in (0,1)$. A query reveals value $\sqrt{k}$ for any good $g_\ell$ with $\ell \in [n-1]$, value $m^{-2\ell/(2k-1)}$ for any good $g\in S_\ell$ with $\ell \in [k-1]$, and value $0$ for any good $g \in B$. 
Let $\bX=(X_i)_{i \in [n]}$ be the allocation computed by the algorithm.

Observe that, since there are $n-1$ goods in the set $\{g_1, g_2, \ldots, g_{n-1}\}$ of top-ranked goods, there must be at least one agent that is not assigned any of these goods. Without loss of generality, let agent $n$ be such an agent. For any $g \in \{g_1, g_2, \ldots, g_{n-1}\}$ let $i_g$ be the agent that is assigned $g$ in $\bX$, that is, $g \in X_{i_g}$. We switch between the following two cases. 

\medskip
\noindent 
{\bf Case 1: There exists a good $g^* \in \{g_1, g_2, \ldots, g_{n-1}\}$ such that $|X_{i_{g^*}}|\geq 2$.}
Consider a valuation profile such that the valuation function of agent $n$ consists exactly of the values that would be revealed if the corresponding good were queried. That is,  
\begin{align*}
    v_n(g) = 
    \begin{cases}
        \sqrt{k},            & \text{if } g \in \{g_1, g_2, \ldots, g_{n-1}\} \\
        m^{-2\ell/(2k-1)},   & \text{if } g \in S_\ell, \ell \in [k-1] \\
        0,                   & \text{if } g \in B.  
    \end{cases}
\end{align*}
Then,
\begin{align*}
    v_n(X_n) 
    &\leq v_n(G\setminus\{g_1, g_2, \ldots, g_{n-1}\}) \\
    &= \sum_{\ell=1}^{k-1} |S_\ell| \cdot m^{-2\ell/(2k-1)} + |B|\cdot 0 \\
    &= \sum_{\ell=1}^{k-1} m^{(2\ell-1)/(2k-1)} \cdot m^{-2\ell/(2k-1)} \\
    &\leq k \cdot m^{-1/(2k-1)}.
\end{align*}
In addition, since $|X_{i_{g^*}}|\geq 2$, there exists a good $g \in X_{i_{g^*}}$ such that
\begin{align*}
    v_n(X_{i_{g^*}}\setminus\{g\}) \geq v_n(g^*) = \sqrt{k}. 
\end{align*}
These imply an EFX approximation of $O\left(\frac{\sqrt{k}}{m^{1/(2k-1)}}\right)$ for agent $n$ towards agent $i_{g^*}$.

\medskip
\noindent 
{\bf Case 2: $X_{i_{g}} = \{g\}$ for every $g \in \{g_1, g_2, \ldots, g_{n-1}\}$.}
We will argue that agent $n-1$ achieves the desired upper bound on the EFX approximation towards agent $n$; note that, in this case, since each agent in $[n-1]$ is assigned a singleton bundle, agent $n$ is assigned all goods in $G \setminus \{g_1, g_2, \ldots, g_{n-1}\}$. To simplify our notation, suppose that $X_{n-1}=\{g_{n-1}\}$. We switch between two cases depending on whether the value of agent $n-1$ for $g_{n-1}$ has been queried by the algorithm or not. 
\begin{itemize}[leftmargin=*]
    \item[] {\bf Agent $n-1$ is not queried for $g_{n-1}$.} 
    Since the value $v_{n-1}(g_{n-1})$ is unknown, we can set it to be as low as possible; in particular, we can set it to be equal to the value $m^{-2/(2k-1)}$ that could be revealed for the goods in set $S_1$ which are ranked below $g_{n-1}$. For any other good $g$, to account for all possible queries, we will set $v_{n-1}(g)$ to be the value that would be revealed if $g$ was queried. 
    That is, we have
    \begin{align*}
        v_{n-1}(g) = 
        \begin{cases}
            \sqrt{k},           & \text{if } g \in \{g_1, g_2, \ldots, g_{n-2}\} \\
            m^{-2/(2k-1)},       & \text{if } g = g_{n-1} \\
            m^{-2\ell/(2k-1)},   & \text{if } g \in S_\ell, \ell \in [k-1] \\
            0,                  & \text{if } g \in B \\
        \end{cases}
    \end{align*}
    Hence,
    \begin{align*}
        v_{n-1}(X_{n-1}) = v_{n-1}(g_{n-1}) = m^{-2/(2k-1)}
    \end{align*}
    and, for any $g \in B$, since $g \in X_n$,  
    \begin{align*}
        v_{n-1}(X_n \setminus \{g\}) 
        &= \sum_{\ell=1}^{k-1} v_{n-1}(g)\cdot m^{-2\ell/(2k-1)} + (|B|-1)\cdot 0 \\
        &= \sum_{\ell=1}^{k-1} m^{(2\ell-1)/(2k-1)} \cdot m^{-2\ell/(2k-1)} \\
        &= (k-1) \cdot m^{-1/(2k-1)}.
    \end{align*}
These imply an EFX approximation of $O\left(\frac{1}{k \cdot m^{1/(2k-1)}}\right)$ for agent $n-1$ towards agent $n$.

\item[] {\bf Agent $n-1$ is queried for $g_{n-1}$.}
Since there are $k-1$ queries left, there must be some set among the collection of $S_\ell$ for $\ell \in [k-1]$ and $B$ such that the algorithm does not learn the value of agent $n-1$ for any good therein. Whichever this set is, we can define the values of agent $n-1$ for those goods to be the upper bound (the revealed value for the goods in the set preceding it in the ranking), and all remaining values can be defined to be equal to the ones that would be revealed if queried by the algorithm. 

In particular, suppose agent $n-1$ is not queried for any good in set $S_\lambda$ for some $\lambda \in [k-1]$. We then consider the following consistent valuation function:
\begin{align*}
        v_{n-1}(g) = 
        \begin{cases}
            \sqrt{k},           & \text{if } g \in \{g_1, g_2, \ldots, g_{n-1}\} \\
            m^{-2\ell/(2k-1)},   & \text{if } g \in S_\ell, \ell \in [k-1]\setminus\{\lambda\} \\
            m^{-2(\lambda-1)/(2k-1)}, & \text{if } g \in S_\lambda \\
            0,                  & \text{if } g \in B \\
        \end{cases}
    \end{align*}
Hence,  
$$v_1(X_{n-1}) = v_{n-1}(g_{n-1}) = \sqrt{k},$$ 
and, for any $g \in B$, since $g \in X_n$, 
\begin{align*}
    v_{n-1}(X_{n-1}\setminus\{g\}) 
    &\geq |S_\lambda| \cdot m^{-2(\lambda-1)/(2k-1)} \\
    &= m^{(2\lambda-1)/(2k-1)} \cdot m^{-2(\lambda-1)/(2k-1)} \\
    &= m^{1/(2k-1)}.
\end{align*}
This implies an EFX approximation of $O\left(\frac{\sqrt{k}}{m^{1/(2k-1)}}\right)$ for agent $n-1$ towards agent $n$.

If agent $n-1$ is not queried for any good in set $B$, then we can define all values for goods not in $B$ to be the ones that would be revealed if queried, and $v_{n-1}(g)=m^{-2(k-1)/(2k-1)}$ for any $g \in B$. This again leads to the same asymptotic upper bound of $O\left(\frac{\sqrt{k}}{m^{1/(2k-1)}}\right)$ on the EFX approximation of agent $n-1$ towards agent $n$ since 
$v_{n-1}(X_{n-1})=\sqrt{k}$ 
and, for any $g \in B$,
$$v_{n-1}(X_n\setminus\{g\}) \geq (|B|-1) \cdot m^{-2(k-1)/(2k+1)} \geq \frac{\lambda}{2} m^{1/(2k+1)}.$$ 
\end{itemize}
The proof is now complete. 
\end{proof}

Using \cref{thm:negative}, we can show that any algorithm that makes $O\left(\frac{\log{m}}{\log{\log{m}}}\right)$ queries per agent does not compute a constant-EFX allocation; in other words, the allocations computed by such an algorithm achieve a superconstant EFX approximation. 

\begin{corollary} \label{cor:general-negative-contant}
For $k = O\left(\frac{\log{m}}{\log{\log{m}}}\right)$, no $k$-query algorithm can compute an allocation that achieves an EFX approximation of  
$\omega\left(\frac{1}{\sqrt{\log{\log{m}}}}\right)$.
\end{corollary}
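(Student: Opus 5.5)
The plan is to instantiate \cref{thm:negative} with $k = \Theta\left(\frac{\log m}{\log\log m}\right)$ and evaluate the bound $\frac{\sqrt{k}}{m^{1/(2k-1)}}$. Any algorithm with fewer queries is also a $k$-query algorithm for the larger $k$, since it can simply ignore the extra queries. It therefore suffices to treat $k = c\cdot\frac{\log m}{\log\log m}$ for a suitable constant $c>0$, and I will fix $c = 1/2$.

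\textbf{Step 1 (evaluating the exponential term).} With this choice,
\[
m^{1/(2k-1)} \geq m^{1/(2k)} = \exp\left(\frac{\log m}{2k}\right) = \exp\left(\frac{\log\log m}{2c}\right) = (\log m)^{1/(2c)} = \log m .
\]
The logarithm bases only affect constants.

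\textbf{Step 2 (evaluating the ratio).} We have $\sqrt{k} = \Theta\left(\sqrt{\log m/\log\log m}\right)$. Hence
\[
\frac{\sqrt{k}}{m^{1/(2k-1)}} \leq \frac{\sqrt{k}}{\log m} = O\left(\frac{1}{\sqrt{\log m\,\log\log m}}\right),
\]
which is certainly $O\left(\frac{1}{\sqrt{\log\log m}}\right)$. \cref{thm:negative} then says that no such algorithm achieves an $\omega$ of this quantity. In particular it cannot achieve $\omega\left(\frac{1}{\sqrt{\log\log m}}\right)$, and so no constant approximation is possible either.

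\textbf{Step 3 (checking the hypothesis).} \cref{thm:negative} requires $\sqrt{k} \leq m^{1/(2k-1)}$. By Step 1 the right-hand side is at least $\log m$, while $\sqrt{k} \leq \sqrt{\log m}$, so the condition holds for all large $m$. One should also confirm that the construction in \cref{thm:negative} needs $|B| \geq \lambda m$. This holds because $\sum_\ell |S_\ell| \leq k\, m^{(2k-3)/(2k-1)} = k\,m\cdot m^{-2/(2k-1)} \leq k\,m/(\log m)^{2} = o(m)$.

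The only delicate point is calibrating $c$ so that $m^{1/(2k-1)}$ dominates $\sqrt{k}$ polylogarithmically. Larger $c$ (fewer restrictions) would give $(\log m)^{1/(2c)}$, which is still a power of $\log m$, and the bound remains $o(1)$ as long as $1/(2c) > 1/2$, i.e., $c<1$. The choice $c=1/2$ sits comfortably in this range, so I expect no real obstacle beyond this bookkeeping.
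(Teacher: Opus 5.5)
\textbf{The gap: you fixed the constant you were not free to choose.} Your computations are correct for $c=1/2$, but the reduction in your first paragraph runs the wrong way for the quantifier in the statement. The corollary is asserted for every $k = O(\log m/\log\log m)$, so the leading constant is given to you, not chosen by you. A $k$-query algorithm is also a $K$-query algorithm for any $K\ge k$, so an impossibility transfers only \emph{downward} in $k$. You must therefore prove it at the \emph{largest} $k$ in the range.

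Fixing $c=1/2$ handles only $k\le \frac12\log m/\log\log m$. It says nothing about, e.g., $k=\log m/\log\log m$. The warning sign is that your bound $O(1/\sqrt{\log m\,\log\log m})$ is far stronger than the $1/\sqrt{\log\log m}$ in the statement.

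\textbf{The fix: take $c=1$.} This is exactly where the stated bound comes from:
\begin{itemize}
\item $m^{1/(2k-1)}\ge m^{1/(2k)}=\sqrt{\log m}$, so the hypothesis $\sqrt{k}\le\sqrt{\log m}\le m^{1/(2k-1)}$ of \cref{thm:negative} holds.
\item $\sum_\ell|S_\ell|\le k\,m\cdot m^{-2/(2k-1)}\le km/\log m=m/\log\log m=o(m)$, so the construction is valid.
\item $\sqrt{k}/m^{1/(2k-1)}\le\sqrt{\log m/\log\log m}\,/\sqrt{\log m}=1/\sqrt{\log\log m}$.
\end{itemize}
Monotonicity in $k$ then covers every $k\le\log m/\log\log m$.

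\textbf{Your closing remark is also off.} The case $c=1$ works; it is the threshold, not an excluded value. Your target should be $O(1/\sqrt{\log\log m})$, not merely $o(1)$.

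More importantly, look at $c>1$. There $m^{1/(2k-1)}\approx(\log m)^{1/(2c)}$ is eventually smaller than $\sqrt{k}=\sqrt{c\log m/\log\log m}$. So the hypothesis of \cref{thm:negative} fails and its bound becomes vacuous. Via \cref{thm:negative}, the corollary can therefore only be established when the hidden constant is at most $1$, i.e., for $k\le\log m/\log\log m$. The $O(\cdot)$ in the statement has to be read that way. Claiming the general version by ``choosing a suitable $c$'' hides exactly this limitation, and you should state it explicitly.
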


\cref{cor:general-positive-constant} and \cref{cor:general-negative-contant} together narrow down the amount of queries per agent that are necessary to compute a constant-EFX allocation, but they also leave a gap between $O(n+\log^2{m})$ and $\Omega\left(\frac{\log{m}}{\log\log{m}}\right)$  which seems quite challenging to bridge. In the following sections, we show improved EFX approximation guarantees for two special cases, namely when the number $n$ of agents is constant, and for bivalued instances, i.e., when the valuation functions of the agents take two possible values.

\section{Constant Number of Agents} \label{sec:constant-n}
As we saw in the previous section, the {\sc $\rho$-Virtual-EFX} algorithm (\cref{alg:algorithm:virtual}) provides a variety of tradeoffs between the number of queries and EFX approximation, and, more importantly, it can be appropriately tuned to compute constant-EFX allocations. However, this algorithm also has a number of drawbacks.

First of all, it requires at least $n-1$ queries per agent to operate, and, of course, must make several more to achieve a good EFX approximation. So, this algorithm cannot even be employed in settings where the number of queries we are allowed to make per agent is fewer than the number of agents; note that such a scenario is quite realistic when responding to these queries comes with a high cognitive cost. In addition, even in instances where $n$ is a small constant (which is the most common case in practice, and also the ones we focus on in this section), it turns out that the algorithm does not compute allocations with the best possible EFX approximation as a function of the number $k$ of queries. To see this, let us restate \cref{thm:virtual} using constants $n$ and $\rho$, and transferring the $\log{m}$ factor from the upper bound on the number of queries to the lower bound on the EFX approximation. We have the following.

\begin{corollary} \label{cor:virtual-contant-n}
For constants $n$ and $\rho$, the {\sc $\rho$-Virtual-EFX} algorithm (\cref{alg:algorithm:virtual}) makes $O(k)$ queries per agent and computes an $\Omega\bigg( m^{-\frac{\log{m}}{k+\log{m}}} \bigg)$-EFX allocation.
\end{corollary}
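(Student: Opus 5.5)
This is a reparameterization of \cref{thm:virtual}. The idea is to run the \textsc{$\rho$-Virtual-EFX} algorithm with a number $k'$ of binary searches chosen so that its total query budget is $O(k)$, and then to rewrite the approximation guarantee in terms of $k$.

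First, fix the parameter. With $n$ constant, \cref{thm:virtual} says that using $k'$ binary searches costs $O(n + k'\log m) = O(1 + k'\log m)$ queries per agent. Choose $k' = \max\{1, \lfloor k/\log m\rfloor\}$. When $k \geq \log m$, the query count is $O(k)$. When $k < \log m$, we have $k' = 1$, which strictly requires $O(\log m)$ queries. I will handle that regime separately: the target bound $m^{-\log m/(k+\log m)}$ is then at most $m^{-1/2}$. In that case I would either invoke the purely ordinal guarantee of $1/(m-n)$, or bound the error of the looser "$k/\log m$ fractional searches" interpretation directly. Since the constants in $\Omega(\cdot)$ absorb it, the cleanest route is to note that $k'+1 \geq (k+\log m)/\log m$ up to a factor of $2$, and to read the intended statement with $k\geq \log m$, or more loosely.

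Second, simplify the approximation factor. Since $\rho$ is constant, \cref{thm:virtual} gives a factor of
\[
\frac{\rho}{2m^{1/(k'+1)}} = \Omega\bigl(m^{-1/(k'+1)}\bigr).
\]
Next, compare $k'+1$ with $(k+\log m)/\log m$. From $k' \geq k/\log m - 1$ and $k' \geq 1$, we get
\[
k'+1 \geq \max\Bigl\{2,\ \frac{k}{\log m}\Bigr\} \geq \frac{k+\log m}{2\log m}.
\]
This yields $m^{-1/(k'+1)} \geq m^{-2\log m/(k+\log m)}$. The factor $2$ sits in the exponent, so it is not a constant factor.

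To get the exact exponent, I would instead take $k' = \lceil k/\log m\rceil$. Then $k'+1 \geq (k+\log m)/\log m$ holds directly, giving $m^{-1/(k'+1)} \geq m^{-\log m/(k+\log m)}$. The query cost becomes $O(n + (k/\log m + 1)\log m) = O(k + \log m)$.

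The main obstacle is the additive $\log m$ in the query count when $k$ is small. I would resolve it by observing that for $k \leq \log m$, the claimed bound $m^{-\log m/(k+\log m)} \leq m^{-1/2}$ is weak. The real content of the statement lies in the regime $k \geq \log m$, where $O(k+\log m) = O(k)$. For the small-$k$ regime I would state that the budget is $O(k)$ under the paper's convention $k = \Omega(\log m)$, or otherwise fall back to the ordinal algorithm. I expect the paper's own argument is exactly this substitution $k \mapsto k/\log m$ in \cref{thm:virtual}, with the constants absorbed.
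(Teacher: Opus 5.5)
Your main step, running the algorithm with $k'=\lceil k/\log m\rceil$ binary searches, is the paper's own argument. The paper obtains this corollary by restating \cref{thm:virtual} with $k$ replaced by $k/\log m$, and your ceiling makes that substitution rigorous. Indeed, $k'+1\ge (k+\log m)/\log m$ gives $m^{-1/(k'+1)}\ge m^{-\log m/(k+\log m)}$, and the cost is $O(n+k+\log m)$, which is $O(k)$ whenever $k=\Omega(\log m)$. For $c\log m\le k<\log m$ you simply have $k'=1$, and $\rho/(2\sqrt{m})$ already beats the target.

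What does not work is your fallback for smaller $k$. The target being at most $m^{-1/2}$ does not make it reachable by the ordinal guarantee. With base-2 logarithms and $k\le\log m$, you get $(m-n)\cdot m^{-\log m/(k+\log m)}\approx m^{k/(k+\log m)}\ge m^{k/(2\log m)}=2^{k/2}$. This is superconstant once $k=\omega(1)$; for $k=\sqrt{\log m}$ the ordinal bound misses the target by a factor $2^{\Omega(\sqrt{\log m})}$. So $1/(m-n)$ is within a constant of the target only for $k=O(1)$, and even then it comes from a different algorithm than \textsc{$\rho$-Virtual-EFX}, which is the algorithm the statement is about.

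For $\omega(1)\le k\le o(\log m)$, neither your argument nor the paper's substitution proves the claim. The substitution implicitly needs $k/\log m\ge 1$, because \cref{thm:virtual} requires at least one full binary search costing $\Theta(\log m)$ queries. The clean way to finish is to state the restriction $k=\Omega(\log m)$ explicitly. Alternatively, you could analyze a binary search truncated after $k$ queries, but that modifies the algorithm and is not a corollary of \cref{thm:virtual}.
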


\noindent 
We can now observe that the EFX approximation of the allocations computed by the algorithm cannot be larger than $1/\sqrt{m}$ for any number $k \leq \log{m}$ of queries per agent. In contrast, the negative result that we showed in the previous section (\cref{thm:negative}) suggests that it might be possible to compute allocations with an EFX approximation of $\Omega\left(\frac{1}{m^{1/(2k-1)}}\right)$ for constant $k$, and an approximation of $\Omega(1)$ for $k=O(\log{m})$. 

\bigskip

\noindent{\bf The \PRR algorithm.} In this section, we design a different algorithm that makes $k$ queries per agent, with $k$ that is always independent of $n$. We refer to this algorithm as \PRR (\sPRR), and present it in \cref{alg:algorithm:nagents-kqueries}. When $n$ is constant, the algorithm always computes an $\Omega\left( \frac{1}{\sqrt{k}\cdot m^{1/(2k-1)}}\right)$-EFX allocation. Observe that there is only a multiplicative-gap of $k$ between this positive result and the negative result shown in \cref{thm:negative}, and this gap vanishes when $k$ is constant, implying that the \sPRR algorithm achieves the best possible tradeoff for this special case. In addition, the algorithm computes $\Omega\left(1/\sqrt{\log{m}}\right)$-EFX allocations using $k = O(\log{m})$ queries, which is, however, its limit. 

The core idea of the \sPRR algorithm is the following: Agents with significantly higher value for their top-ranked good (among the currently available ones) compared to the other goods can be assigned just this good and be sufficiently satisfied at the end, whereas agents with relatively small value for their top-ranked good require more goods to avoid envying the other agents too much. To put this idea into practice, we aim to partition the agents into a set $B$ of agents that will be assigned a single good (that is their favorite available good at that point), and a set $R$ of agents that will be allocated the remaining goods by running the \RR algorithm. 

To determine these two sets, we start with all agents as {\em active} and part of $R$, and then run a {\em main loop} that repeats the following as long as there are active agents and some agent in $R$ (to maintain at least one agent that will be assigned more than one goods): 
Looking at the preference profile of the agents, we first identify the set $G_\succ$ all top-ranked goods (i.e., the goods that are ranked at the top by at least one agent). Afterwards, for each such good $g \in G_\succ$, we consider one by one, in an arbitrary ordering, the agents that rank $g$ at the top (set $P_g$). For each agent $i \in P_g$, we partition the goods from top to bottom into $k$ sets $S_{i,1},\ldots,S_{i,k}$ of increasing sizes which are determined by $k-1$ parameters $(\alpha_\ell)_{\ell \in [k-1]}$ that are given as input to the algorithm, and query the value $v_{i,\ell}^*$ of agent $i$ for the top-ranked good within $S_{i,\ell}$ for every $\ell \in [k]$.
Using $k-1$ parameters $(\beta_\ell)_{\ell \in [k-1]}$ that are given as input to the algorithm, we then check how large the value of the agent for her overall top-ranked good is compared to her value for each of the top-ranked goods in the other sets. In particular, we check whether $v_{i,1}^* \geq \beta_{\ell-1} v_{i,\ell}^*$ for every $\ell \in \{2,\ldots,k\}$. If this true, then agent $i$ is assigned $g$, $g$ is removed from the preference profile, $i$ is moved to set $B$ and becomes inactive, and the loop breaks to consider the next good in $G_\succ$ similarly. When the main loop terminates, we complete the allocation for the remaining agents in $R$ (which contains at least one agent) by running the \RR algorithm with input their preference profile for the unallocated goods (the ones not given to the agents of $B$ as singleton bundles). 

\begin{algorithm}[h!]
\caption{\textsc{\PRR} (\sPRR)}\label{alg:algorithm:nagents-kqueries}
\hspace*{\algorithmicindent} \textbf{Input:} Preference profile $\succ$, parameters $(\alpha_\ell)_{\ell \in [k-1]}$ and $(\beta_\ell)_{\ell \in [k-1]}$
\begin{algorithmic}[1]
\small
\State $N_A \gets N$ \Comment{set of active agents, initially all agents} 
\State $B \gets \varnothing$ \Comment{set of agent that get a singleton once they are queried, initially no one}
\State $R \gets N$ \Comment{set of agents that participate in the Round Robin phase, initially everyone}
\While{$N_A \neq \varnothing$ and $|B| < n-1$} \Comment{{\em Main loop}}
    \State $G_\succ \gets \varnothing$ \Comment{set of top-ranked goods in the current preference profile $\succ$, initially empty}
    \For{every $i\in N_A$}
        \State $G_\succ \gets G_\succ \cup \mathtt{top}_\succ(i)$ \Comment{Add the top-ranked good of $i$ in $G_\succ$}
    \EndFor
    \For{every $g \in G_\succ$}
        \State $P_g \gets \{i \in N_A: \mathtt{top}_\succ(i) = g \}$  \Comment{the set of agents that rank $g$ first in $\succ$}
    \EndFor

    \For{every $g \in G_\succ$}
        \For{every $i \in P_g$}
            
            \State Partition the available goods in $\succ$ into $k$ sets $S_{i,1}$, \ldots, $S_{i,k}$ from top to bottom such that: 
            $$\forall \ell \in [k-1], S_{i, \ell} \gets \text{the next $\alpha_\ell$ goods in $\succ_i$}$$ 
            $$S_{i,k} \gets \text{the last $m - \sum_{\ell=1}^{k-1}\alpha_\ell$ goods}$$

            \For{each $\ell \in [k]$} 
                \State Query agent $i$ for her top good $g^*_{i,\ell} \in S_{i,\ell}$ to learn value $v^*_{i,\ell}$
            \EndFor

            \State $N_A \gets N_A \setminus \{i\}$ \Comment{Agent $i$ has been queried and is no longer active}

            \If{$v^*_{i,1} \geq \beta_{\ell-1} \cdot v^*_{i,\ell}$ for all $\ell \in \{2,\ldots, k\}$} \Comment{If $i$ values highly her top-ranked good}
                \State $X_i \gets \{g^*_{i,1}\}$  \Comment{Agent $i$ is assigned only this good}
                \State $\succ \gets \succ \setminus \{g^*_{i,1}\}$ \Comment{and this good is removed from the preference profile.}
                \State $B \gets B \cup \{i\}$ \Comment{Agent $i$ is among the agents that get a singleton}
                \State $R \gets R \setminus \{i\}$ \Comment{and thus she will not participate in the Round Robin phase.}
                \State \textbf{break}
            \EndIf 
        \EndFor
    \EndFor
\EndWhile

\State Complete the allocation $\bX$ by running {\sc Round Robin} (\cref{alg:round-robin}) with input the agents of $R$ and their preference profile $\succ$ over the remaining goods. 

\State \Return $\bX=(X_i)_{i \in N}$
\end{algorithmic}
\end{algorithm}

By appropriately choosing the input parameters $(\alpha_\ell)_{\ell \in [k-1]}$ and $(\beta_\ell)_{\ell \in [k-1]}$, we can show that the \sPRR algorithm computes an allocation with a good EFX approximation guarantee, as follows. 

\begin{theorem} \label{thm:n-agents-k-queries}
For $k \geq 1$ and $\lambda \geq \max\left\{1,n/m^{1/(2k-1)}\right\}$, the \PRR algorithm (\cref{alg:algorithm:nagents-kqueries}) with input parameters $\alpha_\ell = \lambda\cdot  m^{(2\ell-1)/(2k-1)}$ and $\beta_\ell = \sqrt{k}\cdot m^{(2\ell)/(2k-1)}$ for any $\ell \in [k-1]$ computes an allocation that achieves an EFX approximation of 
$$\min\left\{\frac{1}{(\sqrt{k}+1)\cdot \lambda \cdot m^{1/(2k-1)}}, \frac{1}{1+\sqrt{k}\cdot \frac{n}{\lambda} \cdot m^{1/(2k-1)}}\right\}.$$
\end{theorem}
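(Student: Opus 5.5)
We organize the proof around the type of the envying agent $i$ and the envied agent $j$, using the partition into $B$ (singleton agents) and $R$ (Round Robin agents). If $j \in B$, then $|X_j|=1$ and $i$ is trivially EFX towards $j$. It remains to handle $j \in R$, where $X_j$ may be large. This splits into two cases: $i \in R$ or $i \in B$. We first record a fact about the main loop. Each agent is queried at most once, and the loop removes at most $n-1$ goods before Round Robin starts. Moreover, when agent $i$ is queried, every good removed afterwards was assigned to some agent of $B$ as that agent's top available good.

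\textbf{Case $i \in R$, $j \in R$.} Round Robin among the agents of $R$ is EF1, and for additive valuations it is in fact EFX up to the first good picked by $j$ after $i$'s corresponding pick. The danger is that $i$ was queried (failed the test) and lost goods to $B$ agents, or lost top goods before Round Robin. We argue instead directly with the queried values. Suppose $i$ was queried and failed. Then there is some $\ell\in\{2,\dots,k\}$ with $v^*_{i,1} < \beta_{\ell-1} v^*_{i,\ell}$.

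Alternatively, $i$ is still active at termination, which happens only if $|B|=n-1$ and then $R=\{i\}$; in that case there is no $j\in R$ to envy. If the loop stopped because $N_A=\varnothing$, every agent in $R$ was queried. Failing the test means the goods of $S_{i,1}\cup\dots\cup S_{i,\ell-1}$ (there are about $\lambda m^{(2\ell-3)/(2k-1)}$ of them) all have value at least $v^*_{i,\ell} > v^*_{i,1}/\beta_{\ell-1}$. At most $n$ of these goods are lost to $B$, and $\lambda \ge n/m^{1/(2k-1)}$ makes this loss a small fraction. Round Robin then gives $i$ roughly a $1/n$ share of the remaining goods, which yields $v_i(X_i)$ comparable to $\frac{\lambda}{n}\cdot\frac{|S_{\le \ell-1}|}{\beta_{\ell-1}} v^*_{i,1}$.

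Meanwhile $v_i(X_j\setminus\{g\})$ is bounded in two ways. First, the Round Robin EF1 comparison bounds it by $v_i(X_i)$ plus the discrepancy from the at most $n-1$ goods removed to $B$. Second, each removed good is worth at most $v^*_{i,1}$ to $i$, since $i$'s top at query time dominates the later removals, or the removed goods are the $B$ agents' picks. Combining these gives the term $1/(1+\sqrt{k}\frac{n}{\lambda}m^{1/(2k-1)})$, after plugging $\beta_{\ell-1}/\alpha$-ratios $=\sqrt{k}\,m^{1/(2k-1)}/\lambda$.

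\textbf{Case $i \in B$, $j\in R$.} Agent $i$ passed the test, so $v_i(X_i)=v^*_{i,1} \ge \beta_{\ell-1} v^*_{i,\ell}$ for all $\ell$. Every good in $X_j$ was available at $i$'s query time, so it lies in some $S_{i,\ell}$ and has value at most $v^*_{i,1}$ if $\ell=1$ and at most $v^*_{i,\ell}$ otherwise. We then sum. The goods in $S_{i,1}$ number $\alpha_1=\lambda m^{1/(2k-1)}$ and contribute at most $\lambda m^{1/(2k-1)} v^*_{i,1}$. Each $S_{i,\ell}$ with $\ell\ge 2$ contributes at most $\alpha_\ell v^*_{i,1}/\beta_{\ell-1} = \frac{\lambda}{\sqrt{k}} m^{1/(2k-1)} v^*_{i,1}$. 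Here $S_{i,k}$ has at most $m$ goods and $\beta_{k-1}=\sqrt{k}m^{(2k-2)/(2k-1)}$, giving the same bound. Summing over the $k-1$ later sets gives $\sqrt{k}\lambda m^{1/(2k-1)}$, for a total of $(\sqrt{k}+1)\lambda m^{1/(2k-1)} v^*_{i,1}$. This is the first term of the minimum.

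The main obstacle is the $R$–$R$ case. There we must carefully quantify how the goods taken by $B$ agents after or before $i$'s query perturb the Round Robin EF1/EFX guarantee, and turn the failed test into a lower bound on $v_i(X_i)$ relative to the single worst good. I would first try to show that $i$ obtains, via Round Robin, at least $\lfloor(\alpha_{\le\ell-1}-n)/n\rfloor$ goods from $S_{i,\le \ell-1}$. Each of these is worth more than $v^*_{i,1}/\beta_{\ell-1}$, and any excess in $X_j$ can then be charged to a single good of value at most $v^*_{i,1}$.
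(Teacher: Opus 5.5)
Your case structure is the paper's, and your case $i\in B$ is the paper's computation essentially line by line. Your remark that an unqueried agent of $R$ can only occur when $R=\{i\}$ is a detail the paper leaves implicit. The case $i,j\in R$, however, does not go through as written, in two places.

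\textbf{The upper bound on $v_i(X_j)$.} The goods given to $B$ play no role here. Round Robin on its own pool already gives $v_i(X_j)\le v_i(X_i)+v_i(g)$, where $g$ is $j$'s first Round Robin pick (if $j$ precedes $i$). Moreover $v_i(g)\le v^*_{i,1}$, because $g$ was still available when $i$ was queried. You should drop the ``discrepancy from the at most $n-1$ goods removed to $B$'' for two reasons:
\begin{itemize}
\item the claim that every removed good is worth at most $v^*_{i,1}$ to $i$ is false for goods removed before $i$'s query;
\item charging up to $n-1$ goods at $v^*_{i,1}$ each would lose a factor of about $n$ in the second term.
\end{itemize}
Your closing sentence, which charges the excess to a single good of value at most $v^*_{i,1}$, is the correct version.

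\textbf{The lower bound on $v_i(X_i)$.} This is where the hypothesis on $\lambda$ actually matters. Your planned count $\lfloor(\alpha_{\le\ell-1}-n)/n\rfloor$ is adequate for $\ell\ge 3$, since then $\alpha_{\le\ell-1}-n\ge\alpha_{\ell-1}$ because $\alpha_1\ge n$. For $\ell=2$, however, it only gives $v_i(X_i)\ge\frac{\alpha_1-n}{n}v^*_{i,2}$. That yields a guarantee of $1/(1+n\beta_1/(\alpha_1-n))$, strictly below the claimed term. It gives nothing at all when $\alpha_1=n$, which the choice $\lambda=n/m^{1/(2k-1)}$ allows. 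The loss comes from subtracting all $n$ agents while also dividing by $n$.

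The paper's accounting avoids this. At most $|B|=n-|R|$ goods of $S_{i,\ell-1}$ can go to $B$, and Round Robin splits the rest among only $|R|$ agents. So, up to rounding, $v_i(X_i)\ge\frac{\alpha_{\ell-1}-n+|R|}{|R|}\,v^*_{i,\ell}\ge\frac{\alpha_{\ell-1}}{n}\,v^*_{i,\ell}$, where the last step uses $|R|\le n$ and $\alpha_{\ell-1}\ge\alpha_1\ge n$. Combine this with $v_i(X_j)\le v_i(X_i)+v^*_{i,1}<v_i(X_i)+\beta_{\ell-1}v^*_{i,\ell}$. Since $x/(x+c)$ is increasing in $x$, you get exactly $1/(1+n\beta_{\ell-1}/\alpha_{\ell-1})=1/(1+\sqrt{k}\,\frac{n}{\lambda}\,m^{1/(2k-1)})$.

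Separately, your intermediate expression $\frac{\lambda}{n}\cdot\frac{|S_{\le\ell-1}|}{\beta_{\ell-1}}v^*_{i,1}$ counts $\lambda$ twice, since $|S_{\le\ell-1}|$ already contains it.
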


\begin{proof}
For the particular parameters, observe that 
\begin{align*}
&\alpha_1 = \lambda \cdot m^{1/(2k-1)}, \\
&\forall \ell \in \{2,\ldots,k-1\}:  \frac{\alpha_\ell}{\beta_{\ell-1}} = \frac{\lambda}{\sqrt{k}} \cdot m^{1/(2k-1)},\\
&\frac{m}{\beta_{k-1}} = \frac{m^{1/(2k-1)}}{\sqrt{k}},\\
&\forall \ell \in [k-1]: \frac{\beta_\ell}{\alpha_\ell} = \frac{\sqrt{k}}{\lambda} \cdot m^{1/(2k-1)}.
\end{align*}
We will argue about the EFX approximation that an agent $i$ achieves towards another agent $j$. Clearly, if $j \in B$, then $|X_j|=1$, and thus agent $i$ is EFX towards $j$. Hence, we can assume that $j \in R$. We now switch between two cases depending on whether $i \in B$ or $i \in R$. 

\medskip
\noindent 
{\bf Case 1: $i \in B$.}
We have that $X_i = \{g_{i,1}^*\}$ and $v_i(X_i) = v_{i,1}^* \geq \beta_{\ell-1} \cdot v^*_{i,\ell}$ for all $\ell \in \{2,\ldots, k\}$. Since agent $j$ is assigned goods during the \RR phase, we have that $X_j \subseteq \bigcup_{\ell=1}^k S_{i,\ell}$, and thus
\begin{align*}
    v_i(X_j) &\leq \sum_{\ell=1}^k \sum_{g \in S_{i,\ell}} v_i(g) \\
    &\leq \sum_{\ell=1}^k |S_{i,\ell}| \cdot v_{i,\ell}^* \\
    &\leq \bigg( \alpha_1 + \sum_{\ell=2}^{k-1} \frac{\alpha_\ell}{\beta_{\ell-1}} + \frac{m}{\beta_{k-1}} \bigg) \cdot v_{i,1}^* \\
    &\leq \bigg( \lambda m^{1/(2k-1)} + (k-2) \frac{\lambda}{\sqrt{k}} \cdot m^{1/(2k-1)} +  \frac{m^{1/(2k-1)}}{\sqrt{k}} \bigg) \cdot v_{i,1}^* \\
    &\leq (\sqrt{k}+1)\cdot \lambda \cdot m^{1/(2k-1)} \cdot v_{i,1}^*
\end{align*}
Hence, the EFX approximation of $i$ towards $j$ is 
\begin{align*}
    \frac{v_i(X_i)}{v_i(X_j)} \geq \frac{1}{(\sqrt{k}+1)\cdot \lambda \cdot m^{1/(2k-1)}}.
\end{align*}

\medskip
\noindent 
{\bf Case 2: $i \in R$.}
If agent $i$ picks before agent $j$ in the \RR phase, then, at the end, $i$ is envy-free towards agent $j$. So, we can assume that $i$ picks after $j$ in the \RR phase. Let $g_i^*$ be the top-ranked good of $i$ among all the goods that are available just before the \RR phase begins. Observe that all the goods that agent $i$ might rank higher than $g^*_{i,1} \in S_{i,1}$ are assigned as singletons even before $i$ was queried for $g_{i,1}^*$; thus, it must be the case that $i$ ranks $g^*_{i,1}$ higher than $g^*_i$, and we have that $v_i(g_i^*) \leq v_{i,1}^*$. Since the good that $i$ gets at each round of \RR is better than the one she loses to $j$ in the next round, we have that 
\begin{align*}
    v_i(X_j) \leq v_i(X_i) + v_i(g_i^*) \leq v_i(X_i) + v_{i,1}^*.
\end{align*}
The fact $i \in R$ implies that there is an $\ell \in \{2,\ldots,k\}$ such that $v_{i,1}^* < \beta_{\ell-1} \cdot v^*_{i,\ell}$. Out of the $\alpha_{\ell-1}$ goods in $S_{i,\ell-1}$, at most $|B| = n -|R|$ of them might be allocated as singletons to the agents of $B$, while the remaining $\alpha_{\ell-1}-n+|R|$ goods are allocated during the \RR phase. Consequently, $i$ gets a set of goods that she overall values at least as much as a $1/|R|$-fraction of these $\alpha_{\ell-1}-n+|R|$ goods, each of which is valued as at least $v_{i,\ell}^*$. Hence, we have that
\begin{align*}
    v_i(X_i) \geq \frac{\alpha_{\ell-1}-n+|R|}{|R|} \cdot v_{i,\ell}^* \geq \frac{\alpha_{\ell-1}}{n} \cdot v_{i,\ell}^*, 
\end{align*}
where the last inequality follows since the expression $\frac{\alpha_{\ell-1}-n}{|R|}+1$ is decreasing in $|R| \leq n$; note that $a_{\ell-1} > n$ for any $\lambda > \max\left\{1,\frac{n}{m^{1/(2k-1)}}\right\}$.

Using all of the above, the EFX approximation of $i$ towards $j$ is
\begin{align*}
    \frac{v_i(X_i)}{v_i(X_j)} \geq \frac{v_i(X_i)}{v_i(X_i) + v_{i,1}^*} \geq \frac{v_i(X_i)}{v_i(X_i) + \beta_{\ell-1}\cdot v^*_{i,\ell}}.
\end{align*}
The last expression is increasing in $v_i(X_i)$, and thus
\begin{align*}
    \frac{v_i(X_i)}{v_i(X_j)} 
    \geq \frac{\frac{\alpha_{\ell-1}}{n} \cdot v_{i,\ell}^*}{\frac{\alpha_{\ell-1}}{n} \cdot v_{i,\ell}^* + \beta_{\ell-1}\cdot v^*_{i,\ell}}  
    = \frac{1}{1 + n \cdot \frac{\beta_{\ell-1}}{\alpha_{\ell-1}}} 
    = \frac{1}{1 + \sqrt{k} \cdot \frac{n}{\lambda} \cdot m^{1/(2k-1)}}.
\end{align*}
This completes the proof.
\end{proof}

By \cref{thm:n-agents-k-queries} and \cref{thm:negative}, we obtain the following corollaries.

\begin{corollary}\label{cor:constant-n-general-k}
For constant number $n$ agents, the best possible EFX approximation is $\Theta\left(\frac{1}{\sqrt{k} \cdot m^{1/(2k-1)}}\right)$.
\end{corollary}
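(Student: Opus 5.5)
The plan is to prove the two directions separately, treating $n$ as a constant. For the lower bound I would instantiate \cref{thm:n-agents-k-queries} with $\lambda=\sqrt{n}$. Since $n$ is constant, the condition $\lambda\ge\max\{1,n/m^{1/(2k-1)}\}$ holds for all large enough $m$. The first term of the minimum is $\frac{1}{(\sqrt{k}+1)\sqrt{n}\,m^{1/(2k-1)}}$. The second term is $\frac{1}{1+\sqrt{k}\sqrt{n}\,m^{1/(2k-1)}}$. Both are $\Omega\left(\frac{1}{\sqrt{k}\,m^{1/(2k-1)}}\right)$ with constants depending only on $n$, and the \sPRR algorithm makes exactly $k$ queries per agent. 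This settles the $\Omega$ part.

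For the matching $O$ part, \cref{thm:negative} gives only $O(\sqrt{k}/m^{1/(2k-1)})$, which is weaker by a factor of $k$, so I would redo that adversary argument with a sharper construction. I would keep the common ranking $g_1\succ\dots\succ g_{n-1}\succ S_1\succ\dots\succ S_{k-1}\succ B$. However, I would free two parameters: a top value $T$ for the goods $g_1,\dots,g_{n-1}$, and a scaling $s$ of the block sizes, $|S_\ell|=s\,m^{(2\ell-1)/(2k-1)}$. The new adversarial ingredient is this. A value query reveals the value of a \emph{single} good, so every unqueried good may be raised up to the value of the nearest queried good ranked above it without breaking consistency. In Case~2 with $g_{n-1}$ queried, at most one good per block is queried. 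Agent $n-1$'s value for $X_n$ can then be inflated in \emph{every} block simultaneously, not just in one unqueried block. This yields $v_{n-1}(X_n\setminus\{g\})\gtrsim s\,k\,m^{1/(2k-1)}$, a gain of a factor of $k$ over the proof of \cref{thm:negative}.

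I would then recompute the three cases. Case~1 is agent $n$ against the holder of a top good in a non-singleton bundle. Case~2a is where $g_{n-1}$ is unqueried. Case~2b is where $g_{n-1}$ is queried and blocks are inflated. I would then choose $T$ and $s$ so that each of the three ratios is $O(1/(\sqrt{k}\,m^{1/(2k-1)}))$. In Case~2a, taking $s\approx 1/\sqrt{k}$ gives ratio $m^{-2/(2k-1)}/(s k m^{-1/(2k-1)})=1/(\sqrt{k}\,m^{1/(2k-1)})$.

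The main obstacle is balancing Case~1 against Case~2b. With the naive choices, the product of these two ratios is pinned at about $m^{-2/(2k-1)}$, which does not allow both to be $1/(\sqrt{k}\,m^{1/(2k-1)})$. To break this coupling, I would also apply the inflation trick to agent $n$ in Case~1. There the queries to agent $n$ likewise leave most goods in the top-$(n-1)$ region and in the blocks unqueried. So the envied bundle can be raised, or $X_n$'s value lowered, analogously, gaining a factor of $k$ there as well. After that, choosing $T=\Theta(1)$ should make all cases $O(1/(\sqrt{k}\,m^{1/(2k-1)}))$. Verifying that these valuations remain simultaneously consistent with all revealed answers is the step I expect to require the most care.
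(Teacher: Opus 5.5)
Your $\Omega$ half is exactly the paper's proof, which consists only of invoking \cref{thm:n-agents-k-queries} with a constant $\lambda$. Take $\lambda=n$ rather than $\sqrt n$: the hypothesis $\lambda\ge n/m^{1/(2k-1)}$ fails for $\lambda=\sqrt n$ once $m^{1/(2k-1)}<\sqrt n$, i.e.\ for $k$ of order $\log m$, whereas $\lambda=n$ always works. The paper gives no argument at all for a matching $O$ bound, and \cref{thm:negative} is off by a factor $k$, as you noticed.

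Your attempt to close that gap fails at the Case~2b inflation. Raising an unqueried good to the value of the nearest queried good above it only helps if that queried good lies in a higher block. In your construction every answer is a fixed function of the good's block, so the algorithm knows the block boundaries in advance. It can therefore spend agent $(n-1)$'s remaining $k-1$ queries on the \emph{top} good of each block it queries. Every other good of such a block then sits directly below a queried good of the same value and cannot be raised. Only the single unqueried block gains, contributing $\Theta(\min\{1,s\}\,m^{1/(2k-1)})$. This is precisely the accounting of \cref{thm:negative}, not $s\,k\,m^{1/(2k-1)}$. Likewise, in Case~1 the algorithm chooses both the allocation and agent $n$'s query positions (e.g.\ $g^*$ and the bottom of each block), so no factor-$k$ gain there is automatic either. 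Without the Case~2b gain, your own product computation shows that even a factor-$k$ gain in Case~1 would only yield $O(m^{-1/(2k-1)})$.

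More fundamentally, no construction can work in general, because the claimed $O$ bound is false for some growing $k$. Take $n=2$ and query each agent $i$ on her top good $t_i$ and on $k-1$ further goods from position $2$ on, so that the gaps between queried positions are roughly $1,r,r^2,\dots$ with $r=\Theta(m^{1/(k-2)})$. By monotonicity this yields $L_i\le v_i(G\setminus\{t_i\})\le U_i=O(r)\,L_i$. If some agent $i$ has $v_i(t_i)\ge U_i/\sqrt r$, give her $\{t_i\}$ and everything else to the other agent. She is then $1/\sqrt r$-EFX, and the other agent faces a singleton. Otherwise run \RR. The first picker is envy-free, and the second picker $j$ satisfies
\[ v_j(X_{j'})\le v_j(X_j)+v_j(t_j), \qquad 2v_j(X_j)\ge v_j(G\setminus\{t_j\})\ge L_j=\Omega\bigl(v_j(t_j)/\sqrt r\bigr), \]
so she is $\Omega(1/\sqrt r)$-envy-free. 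This gives $\Omega(m^{-1/(2k-4)})$, which is $\Omega(m^{-1/(2k-1)})$ as soon as $k^2=\Omega(\log m)$. For instance, at $k=(\log m)^{2/3}$ the hypothesis $\sqrt k\le m^{1/(2k-1)}$ of \cref{thm:negative} still holds, yet this algorithm beats $1/(\sqrt{k}\,m^{1/(2k-1)})$ by a factor $\Omega(\sqrt k)\to\infty$. So in general only the $\Omega$ direction of the statement is provable. The available upper bound matches it only up to a factor $k$, which is why tightness holds just for constant $k$ (\cref{cor:constant-n-constant-k}).
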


\begin{proof}
 Since $n$ is constant, the $\lambda$ parameter in \cref{thm:n-agents-k-queries} is a constant, and thus the \sPRR algorithm computes an $\Omega\left(\frac{1}{\sqrt{k}\cdot m^{1/(2k-1)}}\right)$-EFX allocation.
\end{proof}

\begin{corollary} \label{cor:constant-n-constant-k}
For constant number $n$ agents and constant number $k$ of queries, the best possible EFX approximation is $\Theta\left(\frac{1}{m^{1/(2k-1)}}\right)$.
\end{corollary}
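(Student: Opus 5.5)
The plan is to sandwich the quantity between the two bounds already in hand. The upper bound comes from \cref{thm:negative} and the lower bound from \cref{thm:n-agents-k-queries} (equivalently, \cref{cor:constant-n-general-k}). With $k$ constant, the factor $\sqrt{k}$ in both bounds is a constant and drops out of the asymptotics, so both sides should collapse to order $m^{-1/(2k-1)}$.

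\textbf{Lower bound.} Instantiate \cref{thm:n-agents-k-queries} with the choice $\lambda = \max\{1, n/m^{1/(2k-1)}\}$.
\begin{itemize}
\item Since $n$ is constant and $m^{1/(2k-1)} \geq 1$, we have $\lambda \leq n$, so $\lambda = \Theta(1)$.
\item The first term of the minimum is then $\frac{1}{(\sqrt{k}+1)\lambda m^{1/(2k-1)}} = \Omega(m^{-1/(2k-1)})$.
\item The second term is $\frac{1}{1+\sqrt{k}(n/\lambda)m^{1/(2k-1)}} \geq \frac{1}{1+\sqrt{k}\,n\,m^{1/(2k-1)}} = \Omega(m^{-1/(2k-1)})$, because $\lambda\ge1$ gives $n/\lambda\le n$, and $1 \leq m^{1/(2k-1)}$ lets us absorb the additive $1$.
\end{itemize}
So \sPRR, which makes exactly $k$ queries per agent, achieves $\Omega(m^{-1/(2k-1)})$-EFX.

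\textbf{Upper bound.} Invoke \cref{thm:negative}. Its hypothesis $\sqrt{k} \leq m^{1/(2k-1)}$ holds for constant $k$ once $m$ is large enough, and small $m$ is irrelevant for asymptotic statements. The theorem then says no $k$-query algorithm is $\omega(\sqrt{k}/m^{1/(2k-1)}) = \omega(m^{-1/(2k-1)})$-EFX. Hence the best achievable approximation is $O(m^{-1/(2k-1)})$, and combined with the lower bound we get $\Theta(m^{-1/(2k-1)})$.

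\textbf{Main obstacle.} The one point that needs care is that the construction in \cref{thm:negative} requires $|B|\ge \lambda' m$ for a constant $\lambda'$, i.e., that $m-n+1-\sum_{\ell<k} m^{(2\ell-1)/(2k-1)}$ is linear in $m$. For constant $k$ and $n$ the largest $S_\ell$ has size $m^{(2k-3)/(2k-1)} = o(m)$, so $|B| \geq m/2$ for large $m$ and the condition holds. Nothing else depends on $m$ beyond this, so the corollary should follow directly from the two cited theorems.
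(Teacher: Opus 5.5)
Your proposal is correct and matches the paper's proof. The lower bound is \cref{thm:n-agents-k-queries} with a constant $\lambda$: your choice $\lambda=\max\{1,n/m^{1/(2k-1)}\}\le n$ makes both terms of the minimum $\Omega(m^{-1/(2k-1)})$ for constant $n$ and $k$. The upper bound is \cref{thm:negative} with the factor $\sqrt{k}$ absorbed, and your check of its hypotheses ($\sqrt{k}\le m^{1/(2k-1)}$ and $|B|=\Omega(m)$ for large $m$) fills in details the paper leaves implicit.
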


\begin{proof}
 Since $n$ is constant, the $\lambda$ parameter in \cref{thm:n-agents-k-queries} is a constant, and thus the \sPRR algorithm computes an $\Omega\left(\frac{1}{m^{1/(2k-1)}}\right)$-EFX allocation. The upper bound follows by \cref{thm:negative} for constant $k$.
\end{proof}

\begin{corollary} \label{cor:constant-n-k=logm}
For a constant number $n$ agents, there is an algorithm that makes $O(\log{m})$ queries per agent and computes an $\Omega(1/\sqrt{\log{m}})$-EFX allocation.
\end{corollary}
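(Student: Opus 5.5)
The plan is to instantiate \cref{thm:n-agents-k-queries} with $k = \Theta(\log m)$ and read off the resulting approximation. I would choose $k = \lceil (\log_2 m + 1)/2 \rceil$, so that $2k-1 \geq \log_2 m$. This choice gives
\[
m^{1/(2k-1)} \leq m^{1/\log_2 m} = 2.
\]
The \PRR algorithm makes exactly $k$ queries per agent in the main loop: each agent is queried once, for the $k$ top goods of its sets $S_{i,1},\ldots,S_{i,k}$, and then becomes inactive. The \RR phase is ordinal. So the query count is $O(\log m)$.

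Next I would fix the parameter $\lambda$. The theorem requires $\lambda \geq \max\{1, n/m^{1/(2k-1)}\}$, and since $m^{1/(2k-1)} \geq 1$, the choice $\lambda = n$ satisfies this. This $\lambda$ is a constant because $n$ is.

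Plugging in, the first term of the minimum becomes
\[
\frac{1}{(\sqrt{k}+1)\,\lambda\, m^{1/(2k-1)}} \geq \frac{1}{2n(\sqrt{k}+1)} = \Omega\!\left(\frac{1}{\sqrt{\log m}}\right).
\]
The second term becomes
\[
\frac{1}{1+\sqrt{k}\,(n/\lambda)\, m^{1/(2k-1)}} \geq \frac{1}{1+2\sqrt{k}} = \Omega\!\left(\frac{1}{\sqrt{\log m}}\right).
\]
Taking the minimum of the two yields the claimed $\Omega(1/\sqrt{\log m})$-EFX guarantee.

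The only step needing care is that the algorithm's parameters are well defined. The set sizes are $\alpha_\ell = \lambda m^{(2\ell-1)/(2k-1)}$, and with $\lambda = n$ their sum $\sum_{\ell<k}\alpha_\ell$ could exceed $m$. In that case the partition into $S_{i,1},\ldots,S_{i,k}$ degenerates: later sets may be empty, and then the corresponding query and check $v^*_{i,1} \geq \beta_{\ell-1} v^*_{i,\ell}$ would be vacuous or undefined. To handle this, I would first check whether the proof of \cref{thm:n-agents-k-queries} tolerates empty sets by treating a missing $v^*_{i,\ell}$ as $0$. I expect it does: the bound $v_i(X_j)\leq \sum_\ell |S_{i,\ell}|\, v^*_{i,\ell}$ only improves when sets are empty, and Case~2 uses only the index $\ell$ at which the check fails, which forces $S_{i,\ell}$ to be nonempty. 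If that argument is uncomfortable, the fallback is to round the sizes to integers and cap the cumulative sizes at $m$, which changes everything only by constant factors.
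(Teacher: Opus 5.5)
Your proposal is correct and follows the paper's proof. The paper likewise instantiates \cref{thm:n-agents-k-queries} with $k=\Theta(\log m)$, so that $m^{1/(2k-1)}=O(1)$, and a constant $\lambda$, which makes both terms of the minimum $\Omega(1/\sqrt{k})=\Omega(1/\sqrt{\log m})$. Your version is more careful in two respects: the paper writes $k=O(\log m)$ where $k=\Omega(\log m)$ is what keeps $m^{1/(2k-1)}$ constant, and your check that empty or truncated sets $S_{i,\ell}$ are harmless (a failing test at index $\ell$ forces $S_{i,\ell}\neq\varnothing$ and hence $|S_{i,\ell-1}|=\alpha_{\ell-1}$) addresses a well-definedness point the paper silently ignores.
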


\begin{proof}
Since $n$ is constant and $m^{1/(2k-1)}$ is constant for $k = O(\log{m})$, the $\lambda$ parameter in \cref{thm:n-agents-k-queries} is also a constant, and the \sPRR algorithm computes an $\Omega(1/\sqrt{\log{m}})$-EFX allocation.
\end{proof}

\section{Bivalued Instances}\label{sec:bivalued}
In this section, we consider a class of well-structured instances in which the agents are known to have only two possible (personalized) values for the goods. In particular, in a {\em bivalued} instance, for every agent $i$, $v_i(g) \in \{h_i,\ell_i\}$ for any good $g \in G$, where $h_i$ and $\ell_i$ are the two possible (real, non-negative) values that agent $i$ assigns to the goods such that $h_i > \ell_i$. We will refer to $h_i$ as the {\em high} value of agent $i$, and, similarly, we will refer to $\ell_i$ as the {\em low} value of agent $i$.

For bivalued instances, when the valuation functions are fully known, EFX allocations always exist and can be computed efficiently in polynomial time by (variants of) the {\sc Match$\&$Freeze} algorithm which repeatedly computes maximum matchings and freezes certain agents to eliminate excessive envy. 
The ``vanilla'' {\sc Match$\&$Freeze} algorithm of \citet{amanatidis2021maximum} was originally designed to work for bivalued instances in which $h_i=h$ and $\ell_i=\ell$ for every agent $i$, but recently \citet{jin2025pareto} and \citet{byrka2025probing} presented modified versions of the algorithm that compute EFX allocations even when the agents have personalized values. 

In more detail, this algorithm works in rounds, as follows:  
In each round, it computes a maximum matching between the {\em unfrozen} agents (initially all agents) and the goods they value as high.
If the matching is perfect (i.e., each agent is matched to a high-valued good), then the agents are assigned the goods they are matched to. Otherwise, if there is no perfect maximum matching, the agents with higher ratio of high to low value are given priority and are assigned the goods they are matched to in a max-cardinality maximum matching. 
Any agent $i$ among the remaining ones (that are not matched to high-valued goods) is arbitrary assigned a low-valued good. 
In case agent $i$ had high value for goods that were, however, assigned to other agents due to the computed maximum matching, then all these agents {\em freeze} for a certain number of rounds to avoid excessive envy between $i$ and these agents; to be exact, these agents freeze for $\lfloor h_i/\ell_i -1 \rfloor$ rounds so that $i$ can catch up to the high value she lost using only low-valued goods. 
We present the pseudocode of the algorithm in \cref{alg:match&freeze}.  

\begin{algorithm}[!h]
\caption{\textsc{Match$\&$Freeze}}\label{alg:match&freeze}
\begin{algorithmic}[1]
\small
\For{each agent $i$}
    \State $X_i \gets \varnothing$ \Comment{Empty initial allocation}
    \State $F_i \gets 0$        \Comment{{\em freeze counter} of agent $i$, initially $0$.}
\EndFor
\State $\calP\gets G$ \Comment{The pool of available goods, initially all are available}
\While{$\calP \neq \varnothing$}
\State $U \gets \varnothing$ \Comment{The set of unfrozen agents in this round that will be assigned a good.}
\For{each agent $i \in N$}
    \If{$F_i > 0$}
        \State $F_i \gets F_i -1$ \Comment{If the agent is frozen in this round, reduce freeze counter by one.}
    \Else
        \State $U \gets U \cup \{i\}$ \Comment{If the agent is not frozen, add her to set $U$.}
    \EndIf
\EndFor
\State $\mu \gets $ (prioritized) maximum matching between agents in $U$ and their high-valued goods.
\For{each agent $i \in \mu$}
    \State $X_i \gets X_i \cup \{\mu(i)\}$  \Comment{$i$ is assigned the good $\mu(i)$ she is matched to in $\mu$}
    \State $\calP \gets \calP \setminus \{\mu(i)\}$ \Comment{and this good is no longer available.}
\EndFor
\For{each agent $i \not\in \mu$}
\If{$\calP \neq \varnothing$}
   \State $X_i \gets X_i \cup \{$ arbitrary good $g$ in $\calP\}$ \Comment{$i$ is assigned an arbitrary available good}
   \State $\calP \gets \calP \setminus \{g\}$ \Comment{and this good is no longer available.}
   \For {each agent $j \in A\cap\mu$}
        \If {$v_i(\mu(j))=h_i$}             \Comment{If $j$ ``stole'' a high-value good from $i$, $j$ must freeze}
            \State $F_j \gets \lfloor h_i/\ell_i -1 \rfloor$  \Comment{for this many rounds.}
        \EndIf
   \EndFor
\EndIf
\EndFor
\EndWhile
\State \Return $\bX = (X_1,\ldots,X_n)$
\end{algorithmic}
\end{algorithm}

In our ordinal model, it is not hard to observe that it is possible to compute an exact-EFX allocation with $O(\log{m})$ queries per agent. Specifically, for each agent, we may perform binary search over her preference ranking (of size $m$) to find the point of transition (if any) from a high value to a low value, and thus fully uncover the valuation profile with $O(\log{m})$ queries. Once we have all this information, we can simply employ {\sc Match$\&$Freeze} (\cref{alg:match&freeze}) to output an EFX allocation. So, our goal for bivalued instances is not to just achieve a good approximation of EFX, but to do so with significantly fewer queries. We first show that we can always compute a $1/2$-EFX allocation using $O(\log{n})$ queries; note that the number $n$ of agents is significantly smaller than the number $m$ of goods in most applications, and thus this bound is significantly improved over the aforementioned $O(\log m)$ bound.  

Our algorithm, coined \MFRR (\sMFRR), works as follows: 
It first runs a binary search on the top-$n$ ranked goods of each agent looking for a threshold good therein such that the values of the agent drop from high to low. If this is the case, then, since there are only two possible values, this is sufficient to uncover the whole valuation function of the agent, and such an agent becomes part of a set $M$. On the other hand, if no such good exists in the top-$n$ ones, then we have an agent with the same (high or low) value for all top-$n$ ranked goods, and such an agent becomes part of a set $R$. 
The algorithm then allocates the goods in phases such that each agent is allocated at most one good during each phase. 
During a phase, the algorithm first assigns goods to the agents of $M$ by running a single round of {\sc Match$\&$Freeze} (\cref{alg:match&freeze}) as a subroutine, and, afterwards, it assigns goods to the agents of $R$ by running a single round of {\sc Round Robin} (\cref{alg:round-robin}). See \cref{alg:MF&RR:short} for a short description of the \sMFRR algorithm with pseudocode, and \cref{alg:MF&RR:full} in the appendix for the full version (which more formally describes how a round of {\sc Match$\&$Freeze} and a round of {\sc Round Robin} run the one after the other during the same phase).

\begin{algorithm}[!ht]
\caption{\MFRR (short version)}\label{alg:MF&RR:short}
\begin{algorithmic}[1]
\small
\State $M \gets \varnothing$  \Comment{Agents with known valuation function that contains a transition point.}
\State $R \gets \varnothing$  \Comment{Agents with possibly unknown valuation function with same value for all top-$n$ goods.}
\For{each agent $i \in N$} 
    \State Binary search over top-$n$ of $\succ_i$ for two consecutive goods $g$ and $g'$ such that $v_i(g) > v_i(g')$.
    \If{the search is successful}
        \State $M \gets M \cup \{i\}$ \Comment{$i$ becomes part of $M$}
    \Else
        \State $R \gets R \cup \{i\}$ \Comment{$i$ becomes part of $R$}
    \EndIf
\EndFor
\While{there are available goods}
\State Run a round of {\sc Match$\&$Freeze} (\cref{alg:match&freeze}) for the agents of $M$
\State Run a round of {\sc Round Robin} (\cref{alg:round-robin}) for the agents of $R$
\EndWhile
\State \Return $\bX = (X_1,\ldots,X_n)$
\end{algorithmic}
\end{algorithm}

\begin{theorem} \label{thm:bivalued-logn}
    For bivalued instances, the \MFRR algorithm (\cref{alg:MF&RR:short}) makes $O(\log{n})$ queries per agent and computes a $1/2$-EFX allocation. 
\end{theorem}

\begin{proof}
    First observe that the algorithm indeed makes $O(\log{n})$ queries per agent since it runs a binary search over the top-$n$ goods of each agent looking for a threshold good (if it exists).    
    By the definition of the algorithm, in each phase, the agents in $M$ who are assigned goods according to {\sc Match$\&$Freeze} have priority over the agents in $R$ who are assigned goods according to \RR. In addition, at the end of each phase, every unfrozen agent  is assigned one good. Finally, observe that an agent in $M$ can only freeze at most once during the execution of the algorithm; this is true since, after freezing, an agent has only low values for the remaining goods (note that if this was not the case, a larger matching would exist just before the agent got frozen, and thus the agent would not get frozen).
    We now argue that any agent $i$ is at least $1/2$-EFX towards any other agent $j$. 

\medskip
\noindent 
{\bf Case 1: $i \in M$ and $j \in M$.}
Since both $i$ and $j$ are assigned goods according to {\sc Match$\&$Freeze}, which is known to compute EFX allocations~\citep{amanatidis2021maximum,jin2025pareto,byrka2025probing}, we have that $i$ is EFX towards $j$. 

\medskip
\noindent 
{\bf Case 2: $i \in M$ and $j \in R$.}
First consider the case where agent $i$ never gets frozen during the execution of  the algorithm. In any phase of the algorithm, agent $i$ is assigned a good before agent $j$ since the agents of $M$ have priority over the agents of $R$. Since this single-phase allocation is done by computing a maximum matching among the agents in $M$ and their high-valued goods, agent $i$ is assigned a low-valued good only if all her high-valued goods have been depleted. Consequently, agent $i$ is always assigned a good that she values at least as much as the good assigned to $j$ in every phase, and thus $i$ is envy-free towards $j$ at the end. 

Next, consider the case where agent $i$ becomes frozen during the $k$-th phase of the algorithm and remains frozen for a number of phases. 
This happens because {\sc Match$\&$Freeze} assigns agent $i$ a good that both she and another agent $t$ value as high, and agent $t$ is assigned a low-valued good. Since this allocation is done via a maximum matching, there is no matching such that both $i$ and $t$ are assigned high-value goods, and thus $i$ must have low value for all remaining goods (including the one that agent $j$ is assigned in the $k$-th phase). 
Since agent $i$ is EFX towards agent $t$ at the end of the algorithm, $t$ is assigned goods before agent $j$, and $i$ has value $\ell_i$ for all goods that remain once she gets frozen, it must be the case that $i$ is also EFX towards agent $j$ at the end of the algorithm. 

\medskip
\noindent 
{\bf Case 3: $i \in R$ and $j \in R$ or $j \in M$.}
If agent $i$ is assigned goods before agent $j$ (which can only happen when $j \in R$), then $i$ ends up being envy-free towards $j$. 
So, we can assume that $i$ is assigned goods after $j$. 
Let $v \in \{h_i,\ell_i\}$ be the value that agent $i$ has for all $n$ goods within her top-$n$, and observe that $i$ is assigned at least one of these goods, which implies that $v_i(X_i) \geq v$. 
Since $i$ values the good she is assigned in each phase of the algorithm at least as much as the good that $j$ is assigned in the next phase, we have that $v_i(X_j) \leq v_i(X_i) + v$. 
Therefore, the EFX-approximation of $i$ towards  $j$ is
\begin{align*}
    \frac{v_i(X_i)}{v_i(X_j)} \geq \frac{v_i(X_i)}{v_i(X_i) + v} \geq \frac12.
\end{align*}
We conclude that any agent is at least $1/2$-EFX towards any other agent, and the computed allocation is $1/2$-EFX, as desired.
\end{proof}

Quite interestingly, we can also show that two queries per agent are sufficient to compute $1/n$-EFX allocations for bivalued instances; Note that the $1/n$ bound is independent of the number $m$ of agents, which is typically the larger parameter, and is constant in most settings where the number of agents is small. This is achieved by running the \PRR algorithm (\cref{alg:algorithm:nagents-kqueries}) of the previous section for $k=2$ and appropriately defined parameters $\alpha_1$ and $\beta_1$. The fact that there are two possible values per agent allows us to perform a tighter analysis on the EFX approximation bound that the computed allocation achieves (compared to \cref{thm:n-agents-k-queries}), and obtain the following result.   

\begin{theorem}
    For bivalued instances, there is an algorithm that makes $2$ queries per agent and computes a $1/n$-EFX allocation. 
\end{theorem}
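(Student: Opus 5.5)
We run \PRR (\cref{alg:algorithm:nagents-kqueries}) with $k=2$, so each queried agent $i$ reveals only $v^*_{i,1}$ (her top available good) and $v^*_{i,2}$ (her top good outside the first $\alpha_1$ available ones). We choose $\alpha_1 = n$ and $\beta_1 = 1$, i.e., as in the analysis of \cref{thm:n-agents-k-queries} but with $\beta_1$ replaced by ``strictly greater'': agent $i$ is put in $B$ iff $v^*_{i,1} > v^*_{i,2}$. Since the instance is bivalued, this test fully decides whether $v^*_{i,1}=h_i$ and $v^*_{i,2}=\ell_i$, i.e., whether her high-valued available goods number fewer than $n+1$ and her top good is high. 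If the test fails, $v^*_{i,1}=v^*_{i,2}$, so all of her top $n+1$ available goods share a single value $v$. The proof then follows the case structure of \cref{thm:n-agents-k-queries}: envy towards $j \in B$ is trivially EFX, so we fix $j\in R$.

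\textbf{Case $i\in B$.} Here $v_i(X_i)=h_i$. At most $n-1$ goods in $S_{i,1}$ besides $g^*_{i,1}$ can carry value $h_i$, and every good of $S_{i,2}$ has value $\ell_i \le h_i$. The immediate bound $v_i(X_j)\le (n-1)h_i + |X_j|\ell_i$ does not give $1/n$, because $|X_j|$ can be large; this is the main obstacle. I would handle it as follows. Since $v^*_{i,2}=\ell_i$, all goods below the top $n$ have value $\ell_i$. If $\ell_i$ is comparable to $h_i/n$ or larger, agent $i$ would need more goods. So the threshold must be refined: put $i$ in $B$ only if $v^*_{i,1}\ge \beta_1 v^*_{i,2}$ with $\beta_1$ large, e.g.\ $\beta_1=m/n$ (or more generally $|{\rm pool}|/n$), and take $\alpha_1=n$. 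Then in Case $i\in B$,
\[ v_i(X_j)\le n h_i + m\ell_i \le 2n h_i, \]
which is off by a constant. To avoid that loss, I would use that $X_j$ is a round-robin share among at least two $R$-agents when $|B|\le n-2$, and otherwise use the exact counts. I expect to tune $\beta_1$ so that the total is at most $n\, v_i(X_i)$ plus the removed good, which is exactly what the EFX condition allows.

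\textbf{Case $i\in R$.} If $i$ picks before $j$ in round robin, $i$ is envy-free towards $j$. Otherwise $v_i(X_j)\le v_i(X_i)+v_i(g^*_i)$ and $v_i(g^*_i)\le v^*_{i,1}$, exactly as in \cref{thm:n-agents-k-queries}. There are two subcases.

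If $v^*_{i,1}=v^*_{i,2}$, then all top-$(n+1)$ goods of $i$ have value $v^*_{i,1}$. At most $n-1$ of them go to $B$, so $i$ obtains one of them in round robin. Hence $v_i(X_i)\ge v^*_{i,1}$, giving $1/2\ge 1/n$ (for $n\ge 2$).

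If instead $v^*_{i,1}=h_i$ but the ratio is below $\beta_1$, the bivalued structure gives $h_i<\beta_1\ell_i$. Agent $i$ receives about a $1/n$ fraction of the roughly $m-n$ remaining goods, each worth at least $\ell_i$. So $v_i(X_i)\gtrsim (m/n^2)\ell_i$, which we compare with $h_i<(m/n)\ell_i$.

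Balancing these two cases through the choice of $\beta_1$ is the crux. I would set $\beta_1$ to equalize the bounds and verify that the resulting ratio is at least $1/n$, using the exact two-value counts (the number of high goods is at most $n$ in the $B$ case) rather than the crude bounds of \cref{thm:n-agents-k-queries}.
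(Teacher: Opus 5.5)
Your proposal stops exactly where the proof has to happen. You never commit to $(\alpha_1,\beta_1)$, and you never establish the $1/n$ bound in the case $i\in B$, which you yourself identify as the crux. Worse, the concrete choice you write down, $\alpha_1=n$ and $\beta_1=m/n$, actually fails. The suggested rescue via round-robin shares does not help: when $|B|=n-1$ the unique $R$-agent receives \emph{all} leftover goods, and the exact counts then give a counterexample. Take $n=2$, let agent $i$ value her top two goods at $1$ and all others at $2/m$, and let the other agent value every good at $1$. Agent $i$ passes the test $1\ge \frac{m}{2}\cdot\frac{2}{m}$ and gets only her top good. The loop then stops because $|B|=n-1$. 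The other agent receives the remaining $m-1$ goods, which $i$ still values at $3-6/m$ after removing a good of value $2/m$. This exceeds $2=n\cdot v_i(X_i)$ whenever $m\ge 7$, so the allocation is not $1/2$-EFX. Your estimate in the case $i\in R$ is also off: a $1/n$ fraction of roughly $m-n$ goods is about $m/n$ goods, so $v_i(X_i)$ is roughly at least $(m/n)\ell_i$, not $(m/n^2)\ell_i$. With the wrong estimate this case looks tight (only $1/(n+1)$ at $\beta_1=m/n$), which hides the fact that it has a lot of slack.

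The missing idea is to spend that slack. Since $\frac{(m/n)\ell_i}{(m/n)\ell_i+\beta_1\ell_i}\ge \frac1n$ as long as $\beta_1\le m(n-1)/n$, you can take $\beta_1$ of order $m$. The paper takes $\beta_1=m/2$, so that in the case $i\in B$ all low-valued goods together are worth at most $m\ell_i\le 2h_i$. The remaining constant is absorbed by shrinking the first block to $\alpha_1=n-1$. The second queried good then sits at position $n$, so at most $n-2$ goods other than $g^*_{i,1}$ can be high, and $v_i(X_j)\le (n-2)h_i+2h_i=n\,v_i(X_i)$. With $\alpha_1=n$ there is essentially no room: the $n-1$ possibly-high goods already use $(n-1)h_i$ of the $n\,h_i$ budget, forcing $\beta_1\ge m-O(n)$, which clashes with the Case $R$ requirement $\beta_1\le m(n-1)/n$ once $m$ is large compared to $n^2$. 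With the paper's parameters the rest of your case analysis goes through. If the two queried values coincide, the top $n$ goods share that value, and at most $|B|+|R|-1=n-1$ of them are gone before $i$'s first pick, giving $1/2$. Otherwise $h_i<\frac m2\ell_i$ and $v_i(X_i)\ge\frac mn\ell_i$ give a ratio of at least $\frac{1}{1+n/2}\ge\frac1n$.
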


\begin{proof}
    We consider the \PRR algorithm (\cref{alg:algorithm:nagents-kqueries}) presented in \cref{sec:constant-n} for $k=2$ with input parameters $\alpha = \alpha_1 = n-1$ and $\beta = \beta_1 = m/2$. 
    We now argue about the EFX approximation that an agent $i$ achieved towards another agent $j$, similarly to the proof of \cref{thm:n-agents-k-queries}. If $j \in B$, then $|X_j|=1$, and thus agent $i$ is EFX towards agent $j$. So, we focus on the case $j \in R$. 

\medskip
\noindent
{\bf Case 1: $i \in B$.}
By definition, for the particular parameters $\alpha$ and $\beta$, when the algorithm considers agent $i$, it queries $i$'s value for her currently top-ranked good $g_{i,1}$ and another good $g_{i,n}$ that appears $n$ positions below in her ranking. 
Since $i\in B$, we have that $v(g_{i,1}) \geq \beta \cdot v(g_{i,n})$, and $X_i = \{g_{i,1}\}$. For the inequality to be true, since agent $i$ has two possible values, $h_i$ and $\ell_i$, it must be the case that $v(g_{i,1})=h_i$ and $v(g_{i,n})=\ell_i$. Since $j \in R$, $X_j$ is a subset of all the goods allocated after $i$ is assigned $X_i$. In the ranking of $i$, there are $n-2$ goods strictly between $g_{i,1}$ and $g_{i,n}$ for which $i$ might have value $h_i$, and thus
\begin{align*}
    v_i(X_j) \leq (n-2) \cdot h_i + (m-n+2) \cdot h_\ell \leq \bigg( n-2 + 2\cdot \frac{m-n+2}{m} \bigg) h_i \leq n \cdot h_i.
\end{align*}
Hence, the EFX approximation of $i$ towards $j$ is $1/n$. 

\medskip
\noindent
{\bf Case 2: $i \in R$.} 
If agent $i$ picks before agent $j$ in the \RR phase of the algorithm, then, at the end, $i$ is envy-free towards agent $j$. So, we can assume that $i$ picks after $j$. Observe that agent $i$ has been queried for two goods $g_{i,1}$ and $g_{i,n}$ that are $n$ positions apart in her ranking. There are two possibilities for the corresponding values:
(a) $v_i(g_{i,1}) = v_i(g_{i,n}) \in \{h_i,\ell_i\}$, or 
(b) $v_i(g_{i,1})=h_i$, $v_i(g_{i,n})=\ell_i$, and, since $i \in R$, $h_i < \beta \cdot \ell_i$. 
For the good $g_i^*$ that $i$ ranks at the top and is available just before the \RR phase starts, we have that 
$v_i(g_i^*) \leq v_i(g_{i,1})$. Since the good that $i$ gets in each round is better than the one she loses to $j$ in the next round, we have that 
\begin{align*}
    v_i(X_j) \leq v_i(X_i) + v_i(g_i^*) \leq v_i(X_i) + v_i(g_{i,1}).
\end{align*}
The EFX approximation of $i$ towards $j$ is then
\begin{align*}
    \frac{v_i(X_i)}{v_i(X_j)} \geq \frac{v_i(X_i)}{v_i(X_i) + v_i(g_{i,1})}.
\end{align*}
If $v_i(g_{i,1}) = v_i(g_{i,n})$, then since agent $i$ is guaranteed to get at least one of her top-$n$ ranked goods in the \RR phase, we have that $v_i(X_i) \geq v_i(g_{i,1})$, and the EFX approximation is at least $1/2$. Otherwise, if $v_i(g_{i,1})=h_i$, $v_i(g_{i,n})=\ell_i$ and $h_i < \beta \cdot \ell_i$, we have that $v_i(X_i) \geq \frac{m}{n} \cdot \ell_i$ since agent $i$ gets at least one low-valued good in every round of the \RR phase, and thus 
\begin{align*}
    \frac{v_i(X_i)}{v_i(X_j)} \geq \frac{ \frac{m}{n} \cdot h_i}{ \frac{m}{n} \cdot \ell_i + h_i} \geq \frac{1}{1 + \frac{n}{m}\beta} = \frac{1}{1 + n/2} \geq \frac{1}{n}.
\end{align*}
This completes the proof.
\end{proof}

\section{Future Directions} \label{sec:conclusion}

Our work leaves some challenging gaps and also opens up interesting directions for future work along different dimensions. 
In terms of our results, the most important question is to fully characterize the Pareto frontier of the number of queries required to achieve constant-EFX, either asymptotically (along the lines of our investigation in this paper) or, in a more refined manner, aiming to optimize the specific constants also. Of course, as demonstrated in our work, this is a quite demanding task, and thus a more tangible goal would be to focus on special cases, including those of constant $n$ and bivalued instances that we considered, but also other important ones, such as $3$-valued instances~\citep{amanatidis2024pushing}, or instances in which the values of the agents can be encoded using (multi)graphs~\citep{christodoulou2023fair,amanatidis2024pushing,sgouritsa2025existence}.

Besides the $\alpha$-EFX approximations that we focused on in this work, our ordinal-information model can be applied to other (more relaxed) types of EFX approximation, such as {\em $\alpha$-EFX with charity}, where some goods can be left unallocated as long as the agents do not (approximately) envy them~\citep{caragiannis2019charity,CKMS20,akrami2025efx,hv2025almost}. In addition, one can also consider alternative fairness notions, such as relaxations of envy-freeness that are ``between'' EFX and EF1, or relaxations of proportionality and MMS, and further combine those with efficiency measures such as social welfare or Pareto optimality. Orthogonal to that, it would be interesting to understand the (im)possibilities of achieving good approximations to EFX (and other fairness notions) in settings where the resources to allocate are not goods, but {\em chores}, for which the agents have costs rather than values~\citep{akrami2023fair,sun2023fairness,garg2025constant,lin2025approximately}. 

The value query model we have considered here dependents on the assumption that the underlying valuation functions of the agents are additive, which allows us to focus on eliciting values for individual goods only. However, in many real-life applications, goods may be complements or substitutes, and hence different combinations of goods might be more or less valuable than the sum of them individually. The fair division literature has also extensively studied such settings with more complex valuations (e.g., submodular or subadditive valuations) in a regime of full information. Naturally, it makes perfect sense to also consider such models when only limited information about the valuations is available. However, for such valuations, we may need to assume access to preference rankings over (some of the possible) bundles and be allowed to query the values of the agents for whole bundles, rather than separate goods. It will be quite interesting to understand the minimum information requirements to achieve good EFX approximations in these settings.

\bibliographystyle{plainnat}
\bibliography{references}

\begin{thebibliography}{42}
\providecommand{\natexlab}[1]{#1}
\providecommand{\url}[1]{\texttt{#1}}
\expandafter\ifx\csname urlstyle\endcsname\relax
  \providecommand{\doi}[1]{doi: #1}\else
  \providecommand{\doi}{doi: \begingroup \urlstyle{rm}\Url}\fi

\bibitem[Akrami et~al.(2023)Akrami, Chaudhury, Garg, Mehlhorn, and Mehta]{akrami2023fair}
Hannaneh Akrami, Bhaskar~Ray Chaudhury, Jugal Garg, Kurt Mehlhorn, and Ruta Mehta.
\newblock {Fair and Efficient Allocation of Indivisible Chores with Surplus}.
\newblock In \emph{Proceedings of the 32nd International Joint Conference on Artificial Intelligence ({IJCAI})}, pages 2494--2502, 2023.

\bibitem[Akrami et~al.(2025)Akrami, Alon, Chaudhury, Garg, Mehlhorn, and Mehta]{akrami2025efx}
Hannaneh Akrami, Noga Alon, Bhaskar~Ray Chaudhury, Jugal Garg, Kurt Mehlhorn, and Ruta Mehta.
\newblock {EFX: A Simpler Approach and an (Almost) Optimal Guarantee via Rainbow Cycle Number}.
\newblock \emph{Operations Research}, 73\penalty0 (2):\penalty0 738--751, 2025.

\bibitem[Amanatidis et~al.(2020)Amanatidis, Markakis, and Ntokos]{ANM2019}
Georgios Amanatidis, Evangelos Markakis, and Apostolos Ntokos.
\newblock {Multiple birds with one stone: Beating 1/2 for {EFX} and {GMMS} via envy cycle elimination}.
\newblock \emph{Theoretical Computer Science}, 841:\penalty0 94--109, 2020.

\bibitem[Amanatidis et~al.(2021{\natexlab{a}})Amanatidis, Birmpas, Filos-Ratsikas, Hollender, and Voudouris]{amanatidis2021maximum}
Georgios Amanatidis, Georgios Birmpas, Aris Filos-Ratsikas, Alexandros Hollender, and Alexandros~A Voudouris.
\newblock {Maximum Nash welfare and other stories about EFX}.
\newblock \emph{Theoretical Computer Science}, 863:\penalty0 69--85, 2021{\natexlab{a}}.

\bibitem[Amanatidis et~al.(2021{\natexlab{b}})Amanatidis, Birmpas, Filos-Ratsikas, and Voudouris]{amanatidis2021peeking}
Georgios Amanatidis, Georgios Birmpas, Aris Filos-Ratsikas, and Alexandros~A Voudouris.
\newblock {Peeking behind the ordinal curtain: Improving distortion via cardinal queries}.
\newblock \emph{Artificial Intelligence}, 296:\penalty0 103488, 2021{\natexlab{b}}.

\bibitem[Amanatidis et~al.(2022)Amanatidis, Birmpas, Filos{-}Ratsikas, and Voudouris]{amanatidis2022matching}
Georgios Amanatidis, Georgios Birmpas, Aris Filos{-}Ratsikas, and Alexandros~A. Voudouris.
\newblock {A Few Queries Go a Long Way: Information-Distortion Tradeoffs in Matching}.
\newblock \emph{Journal of Artificial Intelligence Research}, 74, 2022.

\bibitem[Amanatidis et~al.(2023)Amanatidis, Aziz, Birmpas, Filos-Ratsikas, Li, Moulin, Voudouris, and Wu]{amanatidis2023fair}
Georgios Amanatidis, Haris Aziz, Georgios Birmpas, Aris Filos-Ratsikas, Bo~Li, Herv{\'e} Moulin, Alexandros~A Voudouris, and Xiaowei Wu.
\newblock Fair division of indivisible goods: Recent progress and open questions.
\newblock \emph{Artificial Intelligence}, 322:\penalty0 103965, 2023.

\bibitem[Amanatidis et~al.(2024{\natexlab{a}})Amanatidis, Birmpas, Filos{-}Ratsikas, and Voudouris]{amanatidis2024dice}
Georgios Amanatidis, Georgios Birmpas, Aris Filos{-}Ratsikas, and Alexandros~A. Voudouris.
\newblock {Don't Roll the Dice, Ask Twice: The Two-Query Distortion of Matching Problems and Beyond}.
\newblock \emph{{SIAM} Journal on Discrete Mathematics}, 38\penalty0 (1):\penalty0 1007--1029, 2024{\natexlab{a}}.

\bibitem[Amanatidis et~al.(2024{\natexlab{b}})Amanatidis, Filos-Ratsikas, and Sgouritsa]{amanatidis2024pushing}
Georgios Amanatidis, Aris Filos-Ratsikas, and Alkmini Sgouritsa.
\newblock {Pushing the frontier on approximate EFX allocations}.
\newblock In \emph{Proceedings of the 25th {ACM} Conference on Economics and Computation ({EC})}, pages 1268--1286, 2024{\natexlab{b}}.

\bibitem[Anshelevich et~al.(2021)Anshelevich, Filos-Ratsikas, Shah, and Voudouris]{distortion-survey}
Elliot Anshelevich, Aris Filos-Ratsikas, Nisarg Shah, and Alexandros~A. Voudouris.
\newblock {Distortion in Social Choice Problems: The First 15 Years and Beyond}.
\newblock In \emph{Proceedings of the 30th International Joint Conference on Artificial Intelligence {(IJCAI)}}, pages 4294--4301, 2021.

\bibitem[Babaioff et~al.(2021)Babaioff, Ezra, and Feige]{babaioff2021fair}
Moshe Babaioff, Tomer Ezra, and Uriel Feige.
\newblock {Fair and Truthful Mechanisms for Dichotomous Valuations}.
\newblock In \emph{Proceedings of the 35th {AAAI} Conference on Artificial Intelligence ({AAAI})}, pages 5119--5126, 2021.

\bibitem[Benad{\`{e}} et~al.(2022)Benad{\`{e}}, Halpern, and Psomas]{benade2022dynamic}
Gerdus Benad{\`{e}}, Daniel Halpern, and Alexandros Psomas.
\newblock {Dynamic Fair Division with Partial Information}.
\newblock In \emph{Proceedings of the 36th Annual Conference on Neural Information Processing Systems ({NeurIPS})}, 2022.

\bibitem[Bouveret et~al.(2010)Bouveret, Endriss, and Lang]{bouveret2010fair}
Sylvain Bouveret, Ulle Endriss, and J{\'{e}}r{\^{o}}me Lang.
\newblock {Fair Division under Ordinal Preferences: Computing Envy-Free Allocations of Indivisible Goods}.
\newblock In \emph{Proceedings of the 19th European Conference on Artificial Intelligence ({ECAI})}, pages 387--392. 2010.

\bibitem[Bu et~al.(2024)Bu, Li, Liu, Song, and Tao]{bu2024logarithmic}
Xiaolin Bu, Zihao Li, Shengxin Liu, Jiaxin Song, and Biaoshuai Tao.
\newblock Logarithmic comparison-based query complexity for fair division of indivisible goods.
\newblock In \emph{Proceedings of the 20th International Conference on Web and Internet Economics ({WINE})}, pages 348--365, 2024.

\bibitem[Budish(2011)]{budish2011combinatorial}
Eric Budish.
\newblock {The Combinatorial Assignment Problem: Approximate Competitive Equilibrium from Equal Incomes}.
\newblock \emph{Journal of Political Economy}, 119\penalty0 (6):\penalty0 1061--1103, 2011.

\bibitem[Byrka et~al.(2025)Byrka, Malinka, and Ponitka]{byrka2025probing}
Jaroslaw Byrka, Franciszek Malinka, and Tomasz Ponitka.
\newblock {Probing {EFX} via {PMMS:} (Non-)Existence Results in Discrete Fair Division}.
\newblock \emph{CoRR}, abs/2507.14957, 2025.

\bibitem[Caragiannis and Fehrs(2024)]{caragiannis2024beyond}
Ioannis Caragiannis and Karl Fehrs.
\newblock {Beyond the Worst Case: Distortion in Impartial Culture Electorates}.
\newblock In \emph{Proceedings of the 20th International Conference on Web and Internet Economics ({WINE})}, pages 420--437, 2024.

\bibitem[Caragiannis et~al.(2019{\natexlab{a}})Caragiannis, Gravin, and Huang]{caragiannis2019charity}
Ioannis Caragiannis, Nick Gravin, and Xin Huang.
\newblock Envy-freeness up to any item with high {N}ash welfare: The virtue of donating items.
\newblock In \emph{Proceedings of the 2019 {ACM} Conference on Economics and Computation ({EC})}, pages 527--545, 2019{\natexlab{a}}.

\bibitem[Caragiannis et~al.(2019{\natexlab{b}})Caragiannis, Kurokawa, Moulin, Procaccia, Shah, and Wang]{caragiannis2019unreasonable}
Ioannis Caragiannis, David Kurokawa, Herv{\'e} Moulin, Ariel~D Procaccia, Nisarg Shah, and Junxing Wang.
\newblock {The Unreasonable Fairness of Maximum Nash Welfare}.
\newblock \emph{ACM Transactions on Economics and Computation}, 7\penalty0 (3):\penalty0 12:1--12:32, 2019{\natexlab{b}}.

\bibitem[Chaudhury et~al.(2021)Chaudhury, Kavitha, Mehlhorn, and Sgouritsa]{CKMS20}
Bhaskar~Ray Chaudhury, Telikepalli Kavitha, Kurt Mehlhorn, and Alkmini Sgouritsa.
\newblock {A Little Charity Guarantees Almost Envy-Freeness}.
\newblock \emph{{SIAM} Journal on Computing}, 50\penalty0 (4):\penalty0 1336--1358, 2021.

\bibitem[Chaudhury et~al.(2024)Chaudhury, Garg, and Mehlhorn]{chaudhury2024efx}
Bhaskar~Ray Chaudhury, Jugal Garg, and Kurt Mehlhorn.
\newblock {EFX Exists for Three Agents}.
\newblock \emph{Journal of the ACM}, 71\penalty0 (1):\penalty0 1--27, 2024.

\bibitem[Christodoulou et~al.(2023)Christodoulou, Fiat, Koutsoupias, and Sgouritsa]{christodoulou2023fair}
George Christodoulou, Amos Fiat, Elias Koutsoupias, and Alkmini Sgouritsa.
\newblock {Fair allocation in graphs}.
\newblock In \emph{Proceedings of the 24th {ACM} Conference on Economics and Computation ({EC})}, pages 473--488, 2023.

\bibitem[Ebadian and Shah(2025)]{ebadian2025bit}
Soroush Ebadian and Nisarg Shah.
\newblock Every bit helps: Achieving the optimal distortion with a few queries.
\newblock In \emph{Proceedings of the 39th Annual AAAI Conference on Artificial Intelligence ({AAAI})}, 2025.

\bibitem[Farhadi et~al.(2021)Farhadi, Hajiaghayi, Latifian, Seddighin, and Yami]{farhadi2021almost}
Alireza Farhadi, MohammadTaghi Hajiaghayi, Mohamad Latifian, Masoud Seddighin, and Hadi Yami.
\newblock Almost envy-freeness, envy-rank, and {Nash} social welfare matchings.
\newblock In \emph{Proceedings of the 35th AAAI Conference on Artificial Intelligence ({AAAI})}, pages 5355--5362, 2021.

\bibitem[Feige(2025)]{Feige2025low}
Uriel Feige.
\newblock {Low communication protocols for fair allocation of indivisible goods}.
\newblock In \emph{Proceedings of the 26th {ACM} Conference on Economics and Computation ({EC})}, pages 358--382, 2025.

\bibitem[Garg and Murhekar(2023)]{garg2023computing}
Jugal Garg and Aniket Murhekar.
\newblock Computing fair and efficient allocations with few utility values.
\newblock \emph{Theoretical Computer Science}, 962:\penalty0 113932, 2023.

\bibitem[Garg et~al.(2025)Garg, Murhekar, and Qin]{garg2025constant}
Jugal Garg, Aniket Murhekar, and John Qin.
\newblock {Constant-Factor EFX Exists for Chores}.
\newblock In \emph{Proceedings of the 57th Annual ACM Symposium on Theory of Computing ({STOC})}, pages 1580--1589, 2025.

\bibitem[Halpern and Shah(2021)]{halpern2021fair}
Daniel Halpern and Nisarg Shah.
\newblock {Fair and Efficient Resource Allocation with Partial Information}.
\newblock In \emph{Proceedings of the 30th International Joint Conference on Artificial Intelligence ({IJCAI})}, pages 224--230, 2021.

\bibitem[Hv et~al.(2025)Hv, Ghosal, Nimbhorkar, and Varma]{hv2025efx}
Vishwa~Prakash Hv, Pratik Ghosal, Prajakta Nimbhorkar, and Nithin Varma.
\newblock Efx exists for three types of agents.
\newblock In \emph{Proceedings of the 26th ACM Conference on Economics and Computation ({EC})}, pages 101--128, 2025.

\bibitem[HV et~al.(2025)HV, Mehta, and Nimbhorkar]{hv2025almost}
Vishwa~Prakash HV, Ruta Mehta, and Prajakta Nimbhorkar.
\newblock Almost and approximate {EFX} for few types of agents.
\newblock \emph{CoRR}, abs/2508.15380, 2025.

\bibitem[Jin and Tao(2025)]{jin2025pareto}
Jiarong Jin and Biaoshuai Tao.
\newblock {On Pareto-Optimal and Fair Allocations with Personalized Bi-Valued Utilities}.
\newblock In \emph{Proceedings of the 21st Conference on Web and Internet Economics ({WINE})}, 2025.

\bibitem[Kaviani et~al.(2025)Kaviani, Keshavarz, Seddighin, and Shahrezaei]{kaviani2025improved}
Alireza Kaviani, Alireza Keshavarz, Masoud Seddighin, and AmirMohammad Shahrezaei.
\newblock {Improved Approximate {EFX} Guarantees for Multigraphs}.
\newblock \emph{CoRR}, abs/2506.09288, 2025.

\bibitem[Li et~al.(2022)Li, Bei, and Yan]{li2022proportional}
Zihao Li, Xiaohui Bei, and Zhenzhen Yan.
\newblock {Proportional allocation of indivisible resources under ordinal and uncertain preferences}.
\newblock In \emph{Proccedings of the 38th Conference on Uncertainty in Artificial Intelligence ({UAI})}, pages 1148--1157, 2022.

\bibitem[Lin et~al.(2025)Lin, Wu, and Zhou]{lin2025approximately}
Zehan Lin, Xiaowei Wu, and Shengwei Zhou.
\newblock {Approximately EFX and fPO Allocations for Bivalued Chores}.
\newblock In \emph{Proceedings of the 34th International Joint Conference on Artificial Intelligence ({IJCAI})}, pages 3952--3960, 2025.

\bibitem[Lipton et~al.(2004)Lipton, Markakis, Mossel, and Saberi]{lipton2004approximately}
Richard~J. Lipton, Evangelos Markakis, Elchanan Mossel, and Amin Saberi.
\newblock On approximately fair allocations of indivisible goods.
\newblock In \emph{Proceedings of the 5th ACM Conference on Electronic Commerce ({EC})}, pages 125--131, 2004.

\bibitem[Markakis and Santorinaios(2023)]{markakis2023improv}
Evangelos Markakis and Christodoulos Santorinaios.
\newblock {Improved {EFX} Approximation Guarantees under Ordinal-based Assumptions}.
\newblock In \emph{Proceedings of the 2023 International Conference on Autonomous Agents and Multiagent Systems ({AAMAS})}, page 591–599, 2023.

\bibitem[Oh et~al.(2021)Oh, Procaccia, and Suksompong]{oh2021fairly}
Hoon Oh, Ariel~D Procaccia, and Warut Suksompong.
\newblock {Fairly Allocating Many Goods with Few Queries}.
\newblock \emph{SIAM Journal on Discrete Mathematics}, 35\penalty0 (2):\penalty0 788--813, 2021.

\bibitem[Procaccia and Rosenschein(2006)]{procaccia2006distortion}
Ariel~D. Procaccia and Jeffrey~S. Rosenschein.
\newblock {The Distortion of Cardinal Preferences in Voting}.
\newblock In \emph{Proceedings of the 10th International Workshop on Cooperative Information Agents ({CIA})}, pages 317--331, 2006.

\bibitem[Sgouritsa and Sotiriou(2025)]{sgouritsa2025existence}
Alkmini Sgouritsa and Minas~Marios Sotiriou.
\newblock {On the Existence of EFX Allocations in Multigraphs}.
\newblock In \emph{Proceedings of the 24th International Conference on Autonomous Agents and Multiagent Systems ({AAMAS})}, pages 2735--2737, 2025.

\bibitem[Steinhaus(1949)]{steinhaus1949division}
Hugo Steinhaus.
\newblock Sur la division pragmatique.
\newblock \emph{Econometrica}, pages 315--319, 1949.

\bibitem[Sun et~al.(2023)Sun, Chen, and Doan]{sun2023fairness}
Ankang Sun, Bo~Chen, and Xuan~Vinh Doan.
\newblock Fairness criteria for allocating indivisible chores: connections and efficiencies.
\newblock \emph{Autonomous Agents and Multi Agent Systems}, 37\penalty0 (2):\penalty0 39, 2023.

\bibitem[Zeng and Mehta(2025)]{zeng2025structure}
Jinghan~A. Zeng and Ruta Mehta.
\newblock {On the Structure of EFX Orientations on Graphs}.
\newblock In \emph{Proceedings of the 24th International Conference on Autonomous Agents and Multiagent Systems ({AAMAS})}, pages 2309--2316, 2025.

\end{thebibliography}

\appendix

\section{Full Version of the \MFRR Algorithm}

\begin{algorithm}[H]
\caption{\MFRR (full version)}\label{alg:MF&RR:full}
\begin{algorithmic}[1]
\small
\State $M \gets \varnothing$  \Comment{Agents with known valuation function that contains a transition point.}
\State $R \gets \varnothing$  \Comment{Agents with possibly unknown valuation function with same value for all top-$n$ goods.}
\For{each agent $i \in N$} 
    \State Binary search over top-$n$ of $\succ_i$ for two consecutive goods $g$ and $g'$ such that $v_i(g) > v_i(g')$.
    \If{the search is successful}
        \State $M \gets M \cup \{i\}$ \Comment{$i$ becomes part of $M$.}
        \State $F_i \gets 0$  \Comment{and has an associated {\em freeze counter}, initially $0$.}
    \Else
        \State $R \gets R \cup \{i\}$ \Comment{$i$ becomes part of $R$.}
    \EndIf
    \State $X_i \gets \varnothing$ \Comment{Empty initial allocation.}
\EndFor
\State $\calP \gets G$ \Comment{The pool of available goods, initially all are available.}
\While{$\calP \neq \varnothing$}
\State $U \gets \varnothing$ \Comment{The set of unfrozen agents in $M$ for this round that will be assigned a good.}
\For{each agent $i \in M$}
    \If{$F_i > 0$}
        \State $F_i \gets F_i -1$ \Comment{If agent $i$ is frozen in this round, reduce the freeze counter by one.}
    \Else
        \State $U \gets U \cup \{i\}$ \Comment{If agent $i$ is not frozen, add her to set $U$.}
    \EndIf
\EndFor
\State $\mu \gets $ (prioritized) maximum matching between agents in $U$ and their high-valued goods.
\For{each agent $i \in \mu$}
    \State $X_i \gets X_i \cup \{\mu(i)\}$  \Comment{$i$ is assigned the good $\mu(i)$ she is matched to in $\mu$}
    \State $\calP \gets \calP \setminus \{\mu(i)\}$ \Comment{and this good is no longer available.}
\EndFor
\For{each agent $i \not\in \mu$}
\If{$\calP \neq \varnothing$}
   \State $X_i \gets X_i \cup \{$ arbitrary good $g \in \calP\}$ \Comment{$i$ is assigned an arbitrary available good}
   \State $\calP \gets \calP \setminus \{g\}$ \Comment{and this good is no longer available.}
   \For {each agent $j \in A\cap\mu$}
        \If {$v_i(\mu(j))=h_i$}             \Comment{If $j$ ``stole'' a high-value good from $i$, $j$ must freeze}
            \State $F_j \gets \lfloor h_i/\ell_i -1 \rfloor$  \Comment{for this many rounds.}
        \EndIf
   \EndFor
\EndIf
\EndFor
\For{each agent $i\in R$}
    \If{$\calP \neq \varnothing$}
        \State $X_i \gets X_i \cup \{ \text{top-ranked good } g \in \calP \text{ of } i \}$   \Comment{assign $i$ her top-ranked available good}
        \State $\calP \gets \calP \setminus \{g\}$             \Comment{and make it unavailable.}
    \EndIf
\EndFor

\EndWhile
\State \Return $\bX = (X_1,\ldots,X_n)$
\end{algorithmic}
\end{algorithm}

\end{document}